%% file: template.tex
\documentclass{article}
\input{preamble}

\title{The profit--bias identity in sports betting: bookmaker profit as the public's prediction error}

\author{
 Jacek P. Dmochowski \\
  Department of Biomedical Engineering\\
  City College of New York\\
  New York, NY 10031 \\
  \texttt{jdmochowski@ccny.cuny.edu} 
}

\begin{document}
\maketitle

\input{sections/abstract}
\input{sections/introduction}
\section*{Results}
\input{sections/terminology}   
\input{sections/identity}      
\input{sections/belief.tex}    
\input{sections/goldilocks}    
\input{sections/schematic}     
\input{sections/empirical}     

\input{sections/discussion}

\input{sections/methods}       

\bibliographystyle{plainnat}
\bibliography{references}

\input{sections/supplement}

\end{document}

%% file: preamble.tex
\usepackage{arxiv}
\usepackage{natbib}          
\usepackage[utf8]{inputenc} 
\usepackage[T1]{fontenc}    
\usepackage{hyperref}       
\usepackage{url}            
\usepackage{booktabs}       
\usepackage{amsfonts}       
\usepackage{nicefrac}       
\usepackage{microtype}      
\usepackage{lipsum}
\usepackage{graphicx}
\graphicspath{ {./images/} }
\usepackage{xcolor}

\usepackage{amsmath,amssymb}
\usepackage{amsthm}
\usepackage{braket}

\theoremstyle{plain}
\newtheorem{theorem}{Theorem}

\newtheorem{corollary}[theorem]{Corollary}
\newtheorem{proposition}[theorem]{Proposition}

\theoremstyle{definition}

\theoremstyle{remark}

%% file: sections/abstract.tex
\begin{abstract}
Sports betting moves money continuously from a large public to a small
number of firms. The most influential account of that flow, due to Levitt (2004),
holds that books price away from the market-clearing point to exploit predictable
public biases. It computes profit from two numbers (the probability that a side wins the
proposition and the fraction of handle it attracts), treating the share on a side as
independent of the outcome. Here we relax that assumption and derive a
\emph{profit--bias identity}: profit is affine and increasing in the expected share of
handle on the losing side, with Levitt's expression as the special case of
independence. The identity resolves the book's margin into exactly three channels: its hold,
the product of its price shading and the public's lean, and the covariance between bet share
and outcome. A lean is thus worthless without shading, and shading is worthless
without a lean. Under a public-belief model, the profit driver is the public's Bayes error: the probability of a representative bettor selecting the
losing side. We provide necessary and sufficient conditions for a ``Goldilocks
Zone'': prices at which book \emph{and} bettor both profit. Testing these predictions
on $1{,}139$ Major League Baseball games, we find that the apparent dependence between bet
share and outcome is a Simpson's paradox: present when games are pooled, but absent once
they are separated by which side the book favored. The public leans heavily toward favorites, but we
detect no matching shading, and the realized margin is indistinguishable from the hold.
\end{abstract}

%% file: sections/introduction.tex
\section*{Introduction}

Every account of how a sportsbook makes money rests on a claim about where the betting public's money goes relative to the price at which it was offered. The most influential, due to \citet{levitt2004gambling}, posits that books deliberately price away from the market-clearing value to exploit predictable biases in public betting as opposed to balancing their exposure. That model, and the substantial literature testing it \citep{paul2007sportsbook, paul2008nba, paul2009ncaa, paul2011nfl, paul2012nhl, humphreys2010pointspread, humphreys2011financial}, computes the book's expected profit from two numbers: the probability that the favorite covers, and the fraction of the handle it attracts. Two numbers suffice, as we show, only if the money and the outcome are statistically independent---only if the side drawing the handle is neither more nor less likely to cover than its posted price implies. To the best of our knowledge, this assumption has gone unexamined.

Treating the public's money as unconnected to the outcome discards precisely the information that separates an informed betting public from an uninformed one, namely whether the side that attracts the money is more or less likely to win. This distinction is not incidental: a competing account attributes the bookmaker's margin not to public bias but to adverse selection against privately informed bettors~\citep{shin1991optimal,shin1992prices,shin1993measuring}, a mechanism recently re-examined and weighed against ordinary bettor disagreement~\citep{whelan2024risk,whelan2025estimates}. By assuming independence between bet share and outcome, Levitt's model cannot distinguish a \emph{sharp} public, whose money follows the eventual winner (as the insider account would predict), from a \emph{square} one, whose money follows the eventual loser.


Here we revise the Levitt model to allow a statistical dependence between betting behavior and match outcomes. Relaxing the independence restriction yields a \emph{profit--bias identity}: the sportsbook's expected profit is an affine function of the expected share of handle on the \emph{losing} side, and it reduces to Levitt's expression when bet shares are independent of the outcome. Because the identity presumes neither account, the sharp and square cases can be distinguished empirically rather than assumed away~\citep{woodland1994baseball,gandar2002reexamining,ottaviani2008favorite}. The identity further separates the book's expected margin into three parts: the hold that a balanced book collects, a term in which the book's price shading multiplies the public's lean, and the covariance between bet share and outcome. Because the second term is the product of shading and lean, neither factor generates profit on its own: a lean earns nothing for a book that does not shade, and shading earns nothing when the public does not lean. Under a public-belief model in which bettors act on a noisy, biased signal of the game~\citep{snowberg2010explaining,jullien2000estimating,avery1999sentiment}, the profit-driving quantity is the probability that the public selects the losing side, a misclassification (Bayes-error) probability. The same formulation characterizes a ``Goldilocks Zone'' of prices at which the sportsbook and an infinitesimal bettor can simultaneously profit~\citep{shleifer1997limits,hubacek2019exploiting}.


We tested these predictions with data from $1{,}139$ Major League Baseball (MLB) games, finding that the apparent bias of public money onto the eventual losing side is a Simpson's paradox: present when games are pooled, and absent once they are separated by which side the book has designated as the favorite. The public is square in the aggregate but not in either of its halves. The data do exhibit a strong favorite bias in public betting, but \emph{without} an accompanying increase in book profit over its hold, which we connect to the absence of detectable price shading. 

%% file: sections/terminology.tex
\subsection*{Terminology}
The forthcoming treatment employs the terminology of sports betting, which is reviewed in Table~\ref{tab:terminology}.
Of note, the terms \emph{lean} and \emph{shading} refer to skews in public betting and sportsbook pricing, respectively, and are often conflated.

\begin{table}[t]
\centering
\caption{\textbf{Betting terminology.} The terms are defined in the point-spread setting, where $M$ is the realized outcome, $b_h$ the public's home share of handle, $p^\ast$ the price-implied probability that the home side wins the bet (with the book's commission removed), and $\bar F_m=\Pr(M>s)$ is the corresponding true
probability.}
\label{tab:terminology}
\small
\begin{tabular}{@{}l l p{0.66\textwidth}@{}}
\toprule
\textbf{Term} & \textbf{Symbol} & \textbf{Meaning} \\
\midrule
Spread   & $s$      &  A number posted by the book. The spread is a handicap applied to one side's score so that a wager on either is closer to a 50/50 proposition. The
                       bettor's task is to predict whether the outcome will exceed the
                       spread: $M>s$. \\[2pt]
Price    & $\phi_h,\phi_v$ & The profit on a winning unit wager on the home and away sides,
                       respectively. At the standard $-110$ price $\phi_h=\phi_v=100/110=0.909$:
                       the bettor risks $1.10$ to win $1$. The special case $\phi_h=\phi_v\equiv\phi$ is referred to herein as
                       \emph{symmetric vig}. \\[2pt]
Cover    & $M>s$    & A side ``covers'' when it wins the match taking into account the spread -- also called ``beating the spread''. We take the home side to
                       cover when $M>s$ and to lose the bet when $M<s$. \\[2pt]
Push     & $M=s$    & A ``push'' occurs when the realized outcome is equal to the spread. In this event, all wagers are returned.  \\[2pt]
Handle   & ---      & The total amount staked. Book profit is reported per unit of
                       handle, such that a margin of $4\%$ is four cents kept per dollar
                       wagered. \\[2pt]
\addlinespace
Vig      & ---      & The book's commission, embedded in the prices rather than charged
                       separately. The vig is the amount by which the sum of the two
                       price-implied probabilities exceeds one: $\left( 1 + \phi_h \right)^{-1} + \left( 1 + \phi_v \right)^{-1} - 1$. For $\phi_h = \phi_v = \phi$, the vig is
                       $(1-\phi)/(1+\phi)$, or $4.8\%$ at $-110$. Note that ``the vig'' is often used
                       inconsistently in practice: sometimes as this excess, sometimes as the $10\%$ surcharge on the
                       stake, or sometimes as the hold below. \\[2pt]
Hold     & $h$      & The share of handle that the book keeps when its liabilities are
                       balanced: its return from the vig alone,
                       $h=(1-\phi_h\phi_v)/(2+\phi_h+\phi_v)$, or $(1-\phi)/2$ under a symmetric
                       vig, giving $4.5\%$ at $-110$. \\[2pt]
Shading  & $\theta$ & The deviation of the book's posted price from the truth,
                       $\theta=p^\ast-\bar F_m$. Positive when the home side is priced as more
                       likely to win than it truly is. \\[2pt]
\addlinespace
Lean     & $\lambda$& The public's displacement from balanced betting:
                       $\lambda=b_h-p^\ast$. Positive when public betting favors the home
                       side beyond what is implied by the book price. \\[2pt]
Square   & $\delta>0$ & A public whose money tends to be wagered on the eventual \emph{loser},
                       so that the book profits beyond its hold. \\[2pt]
Sharp    & $\delta<0$ & A public whose money tends to be wagered on the eventual \emph{winner} (book profits less than the hold, or takes a loss). 
                       \\
\bottomrule
\end{tabular}

\vspace{4pt}
\begin{minipage}{0.92\textwidth}
\footnotesize We note also that the term ``line'' is the generalization of the spread $s$ to
markets such as the point total or player props. In all cases it is a number set by the book
that dictates the threshold determining which side wins the wager.
\end{minipage}
\end{table}

%% file: sections/identity.tex
\subsection*{Setup and notation}
We consider a generic two-sided sportsbook proposition, encompassing point spreads, totals, and player props.
Let the realized outcome be a real-valued random variable $M\in\mathbb{R}$. For spreads, $M$ can be the realized
margin of victory (home score minus visitor score); for totals, $M$ can be total points; for props, $M$ can be a
player performance statistic.

The sportsbook posts a line $s\in\mathbb{R}$. Relative to $s$, we write the possible outcomes as:
\[
\text{home loses} \iff M<s,\qquad
\text{visitor loses} \iff M>s,\qquad
\text{push} \iff M=s.
\]
The outcome CDF at $s$ and push probability are given by:
\begin{equation}
\label{eq:Fm_methods}
F_m(s)\equiv\Pr(M<s),
\qquad
\pi_{\mathrm{push}}(s)\equiv\Pr(M=s),
\end{equation}
such that the complementary CDF is $\bar F_m(s)\equiv\Pr(M>s)=1-F_m(s)-\pi_{\mathrm{push}}(s)$.

Let the (random) fraction of total stake placed on the home side at price $s$ be $B_h(s)\in[0,1]$, and let the
fraction placed on the visitor side be $B_v(s)\equiv 1-B_h(s)$. We explicitly allow statistical dependence between
betting behavior and the realized outcome (i.e., $B_h(s)$ may be informative about $M$). We define the conditional
expected bet shares on the \emph{eventual losing side} by:
\begin{equation}
\label{eq:b_def_general}
b_{h,L}(s)\equiv \mathbb{E}[\,B_h(s)\mid M<s\,],
\qquad
b_{v,L}(s)\equiv \mathbb{E}[\,B_v(s)\mid M>s\,].
\end{equation}
These quantities capture the fraction of the handle that (in expectation) ends up on the losing side, conditional on
which side actually loses.

\subsection*{The profit--bias identity}
We derive the per-game expected profit in the symmetric-vig ($\phi_h=\phi_v=\phi$), no-push case ($\pi_{\mathrm{push}}(s)=0$); the general treatment allowing asymmetric vigs
$\phi_h,\phi_v\in[0,1]$ and pushes is deferred
to the \hyperref[sec:si]{SI} (Sec.~S1). Throughout, total stake is normalized to $B_h(s)+B_v(s)=1$. The book's single-game profit is
$\pi(s)=B_h(s)-\phi\,B_v(s)$ when the home side loses, $\pi(s)=B_v(s)-\phi\,B_h(s)$ when the visitor loses, and $\pi(s)=0$ in the event of a push. Conditioning on the realized outcome and taking total expectation gives:
\begin{equation}
\label{eq:profit_total_expectation}
\mathbb{E}[\pi(s)]
=
F_m(s)\,\mathbb{E}\!\left[B_h(s)-\phi\,B_v(s)\mid M<s\right]
+
\bar F_m(s)\,\mathbb{E}\!\left[B_v(s)-\phi\,B_h(s)\mid M>s\right].
\end{equation}
Using $B_v=1-B_h$ and the conditional losing-side shares \eqref{eq:b_def_general}, the two conditional
expectations equal $(1+\phi)\,b_{h,L}(s)-\phi$ and $(1+\phi)\,b_{v,L}(s)-\phi$, respectively, so that:
\begin{equation}
\label{eq:profit_presum}
\mathbb{E}[\pi(s)]
=
F_m(s)\big[(1+\phi)\,b_{h,L}(s)-\phi\big]
+
\bar F_m(s)\big[(1+\phi)\,b_{v,L}(s)-\phi\big].
\end{equation}
Collecting the $(1+\phi)$ terms and using $F_m(s)+\bar F_m(s)=1$ in the no-push case yields the \emph{profit--bias
identity}:
\begin{equation}
\label{eq:profit_bias_identity}
\boxed{\;
\mathbb{E}[\pi(s)]
=
(1+\phi)\,Q(s)-\phi,
\;}
\end{equation}
where:
\begin{equation}
\label{eq:Q_def_methods}
Q(s)\equiv b_{h,L}(s)\Pr(M<s)+b_{v,L}(s)\Pr(M>s)
\end{equation}
is the expected fraction of handle wagered on the losing side. Because $b_{h,L}(s)$ and $b_{v,L}(s)$ are
conditional losing-side shares, $Q(s)$ admits arbitrary dependence between betting and outcome, and the
book's expected profit is affine and strictly increasing in it.

The profit-bias identity generalizes the bookmaker-profit expression of \citet{levitt2004gambling}. Namely, when bet shares are independent of the outcome, the two coincide. However, whenever public betting covaries with the realized outcome, the expressions deviate. Levitt's book profit underestimates the true value whenever the public allocates a larger stake on the eventual losing side, while overestimating when the stake favors the winning side. The mathematical relationship between the profit-bias identity and Levitt's bookmaker profit is detailed in \hyperref[sec:si]{SI}, Sec.~S2.   



Independence of bet shares from the outcome is the pair of conditions: $b_{h,L}(s)=b_h(s)$ and
$b_{v,L}(s)=b_v(s)$, where $b_h(s)\equiv\mathbb{E}[B_h(s)]$ and $b_v(s)\equiv1-b_h(s)$ are the unconditional shares.
Denoting the winning-side home share by $b_{h,W}(s)\equiv\mathbb{E}[B_h(s)\mid M>s]$, the unconditional share is a convex combination of the two: $b_h(s)=F_m(s)\,b_{h,L}(s)+\bar F_m(s)\,b_{h,W}(s)$. Note that $b_{h,L}(s)=b_h(s)$ can only be satisfied when $b_{h,L}(s)=b_{h,W}(s)$. Similarly, $b_{v,L}(s)=b_v(s)$ requires $b_{v,L}(s)=b_{V,W}(s)$. The two conditions therefore collapse to:
\begin{equation}
\label{eq:delta_def}
\boxed{\;\delta(s)\equiv b_{h,L}(s)-b_{h,W}(s)=0,\;}
\end{equation}
which is the discrepancy between the expected home stake share in games the home side loses versus wins. Thus, in order to test whether bet shares covary with realized outcomes, one evaluates $\delta(s)$ against zero. 



The identity \eqref{eq:profit_bias_identity} attributes the book's entire margin to $Q(s)$, but does not afford insight into the sources of its variation. To that end, we write the book's \emph{shading} and the public's \emph{lean} as
signed deviations:
\begin{equation}
\label{eq:theta_lambda_def}
\theta(s)\equiv F_m(s)-\tfrac12,
\qquad
\lambda(s)\equiv b_h(s)-\tfrac12,
\end{equation}
such that $\theta(s)>0$ when the spread exceeds the median outcome and $\lambda(s)>0$
when public betting favors the home side. This allows us to decompose the losing-side share into three distinct and interpretable components.

\begin{proposition}[Anatomy of the book's margin]
\label{prop:margin_anatomy}
Under a symmetric vig $\phi$ and no pushes:
\begin{equation}
\label{eq:margin_anatomy}
Q(s)=\tfrac12+2\,\theta(s)\,\lambda(s)+2F_m(s)\bar F_m(s)\,\delta(s),
\end{equation}
and therefore:
\begin{equation}
\label{eq:margin_anatomy_profit}
\mathbb{E}[\pi(s)]=\underbrace{\frac{1-\phi}{2}}_{\text{hold}}
\;+\;2(1+\phi)\Big[\underbrace{\theta(s)\,\lambda(s)}_{\text{shading}\,\times\,\text{lean}}
\;+\;\underbrace{F_m(s)\bar F_m(s)\,\delta(s)}_{\text{outcome covariance}}\Big].
\end{equation}
The book's expected margin exceeds its hold if and only if the bracket is strictly positive.
\end{proposition}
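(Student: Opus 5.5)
Write $F\equiv F_m(s)$, $\bar F\equiv\bar F_m(s)=1-F$ (no pushes), $b_L\equiv b_{h,L}(s)$, $b_W\equiv b_{h,W}(s)$, and $b_h\equiv b_h(s)$. The plan is to express $Q(s)$ entirely through the home-side conditional shares $b_L,b_W$, and then to substitute the parameters $\theta,\lambda,\delta$.

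The first step is to rewrite the visitor losing-side share. Because $B_v=1-B_h$, conditioning on $M>s$ gives
\[
b_{v,L}(s)=\mathbb{E}[1-B_h\mid M>s]=1-b_W.
\]
Hence, from \eqref{eq:Q_def_methods},
\[
Q(s)=F\,b_L+\bar F(1-b_W).
\]
The second step uses the convex-combination identity stated before the proposition, $b_h=F\,b_L+\bar F\,b_W$. Together with $b_L-b_W=\delta$, it expresses the conditional shares in terms of $b_h$ and $\delta$:
\[
b_L=b_h+\bar F\delta,\qquad b_W=b_h-F\delta.
\]
Substituting these into $Q$ gives
\[
Q=F b_h+F\bar F\delta+\bar F-\bar F b_h+F\bar F\delta=\bar F+(F-\bar F)\,b_h+2F\bar F\delta .
\]
The third step inserts $F=\tfrac12+\theta$, $\bar F=\tfrac12-\theta$ and $b_h=\tfrac12+\lambda$. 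Then $F-\bar F=2\theta$, and
\[
\bar F+2\theta b_h=\tfrac12-\theta+\theta+2\theta\lambda=\tfrac12+2\theta\lambda.
\]
This yields \eqref{eq:margin_anatomy}.

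The profit form \eqref{eq:margin_anatomy_profit} follows by inserting this expression for $Q$ into the profit--bias identity \eqref{eq:profit_bias_identity}. The constant term is $(1+\phi)\tfrac12-\phi=\tfrac{1-\phi}{2}$, which is the hold, and the remaining terms are $2(1+\phi)\bigl[\theta\lambda+F\bar F\delta\bigr]$. The final "if and only if" claim holds because $2(1+\phi)>0$ for $\phi\ge 0$, so the margin exceeds the hold exactly when the bracket is strictly positive.

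The only delicate point is the second step. The bookkeeping must express the winning-side visitor share as $1-b_{h,W}$ rather than conflate it with $b_{v,L}$, and it relies on $F+\bar F=1$. One degenerate case needs a remark: if $F\in\{0,1\}$, one of the conditional expectations is undefined. In that case the $\delta$ term carries the coefficient $F\bar F=0$, so the identity still holds under any convention for $\delta$.
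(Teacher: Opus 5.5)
Your proof is correct and follows the paper's argument: the paper proves the asymmetric version (Proposition~\ref{prop:margin_anatomy_asym}) by the same substitution $b_{h,L}=b_h+\bar F_m\delta$, $b_{v,L}=1-b_{h,W}=(1-b_h)+F_m\delta$ into the profit expression and then sets $\phi_h=\phi_v=\phi$, whereas you substitute directly into $Q(s)$ and then apply the profit--bias identity. The only slip is in your closing remark: $1-b_{h,W}$ is exactly $b_{v,L}$, so the conflation to avoid is $b_{v,L}=1-b_{h,L}$, not the one you name.
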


\noindent The general expression allowing asymmetric vigs and pushes is developed in the \hyperref[sec:si]{SI}
Sec.~S1. Two consequences follow from (\ref{eq:margin_anatomy_profit}). 

\begin{corollary}[Public lean only increases book profit if matched with a shaded line.]
\label{cor:lean_needs_shading}
If the book prices at the median, then $F_m(s)=\bar F_m(s)=\tfrac12$ and $\theta(s)=0$, yielding:
\[
Q(s)=\tfrac12\big(1+\delta(s)\big),
\qquad
\mathbb{E}[\pi(s)]=\frac{1-\phi}{2}+\frac{1+\phi}{2}\,\delta(s),
\]
regardless of the public's lean $\lambda(s)$. Thus, when pricing accurately, the book exceeds its hold if and only if $\delta(s)>0$.
\end{corollary}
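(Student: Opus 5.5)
The corollary is a direct specialization of Proposition~\ref{prop:margin_anatomy}, so the plan is to substitute the median-pricing condition into \eqref{eq:margin_anatomy} and \eqref{eq:margin_anatomy_profit} and simplify.

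First, pricing at the median means $F_m(s)=\tfrac12$. With no pushes, $\bar F_m(s)=1-F_m(s)=\tfrac12$, and the definition \eqref{eq:theta_lambda_def} gives $\theta(s)=0$. The shading--lean term $2\theta(s)\lambda(s)$ therefore vanishes identically, whatever the value of $\lambda(s)$. This is the precise sense in which the result holds ``regardless of the public's lean.''

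Second, the covariance term simplifies to $2F_m\bar F_m\delta = 2\cdot\tfrac14\,\delta = \tfrac12\delta$. Hence $Q(s)=\tfrac12+\tfrac12\delta(s)=\tfrac12(1+\delta(s))$.

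Third, the profit follows from the identity \eqref{eq:profit_bias_identity}:
\[
(1+\phi)\cdot\tfrac12(1+\delta)-\phi=\frac{1-\phi}{2}+\frac{1+\phi}{2}\,\delta .
\]
The same expression can be read off directly from \eqref{eq:margin_anatomy_profit}.

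Finally, since $1+\phi>0$, the profit exceeds the hold $(1-\phi)/2$ if and only if $\delta(s)>0$. This also matches the ``iff'' clause of the proposition, because the bracket in \eqref{eq:margin_anatomy_profit} reduces to $\tfrac14\delta(s)$.

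There is no real obstacle here. The only step needing care is to use the no-push assumption explicitly when deducing $\bar F_m(s)=\tfrac12$ from $F_m(s)=\tfrac12$, since with pushes the median condition alone would not fix $\bar F_m(s)$.
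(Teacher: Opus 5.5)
Your proposal is correct and follows the paper's route: the corollary is obtained by substituting $F_m(s)=\bar F_m(s)=\tfrac12$, hence $\theta(s)=0$, into Proposition~\ref{prop:margin_anatomy}. The shading--lean term then vanishes, the covariance term becomes $\tfrac12\delta(s)$, and the ``if and only if'' follows from $1+\phi>0$. Your explicit use of the no-push assumption to obtain $\bar F_m(s)=\tfrac12$ is the one detail that keeps the substitution airtight.
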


\begin{corollary}[Price shading only increases book profit if paired with a public lean.]
\label{cor:shading_needs_lean}
If bet shares are independent of the outcome, $\delta(s)=0$, then
$\mathbb{E}[\pi(s)]=(1-\phi)/2+2(1+\phi)\,\theta(s)\lambda(s)$, which exceeds the hold if and only if
$\theta(s)$ and $\lambda(s)$ are non-zero and have the same sign. If public betting is balanced, price shading does not increase book profit.
\end{corollary}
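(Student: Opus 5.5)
The plan is to specialize Proposition~\ref{prop:margin_anatomy} directly, exactly as Corollary~\ref{cor:lean_needs_shading} specializes it at the median. I take the margin decomposition \eqref{eq:margin_anatomy_profit} as given, valid under a symmetric vig $\phi$ and no pushes. I then impose the independence hypothesis.

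By the discussion preceding \eqref{eq:delta_def}, independence of bet shares from the outcome amounts to the two conditions $b_{h,L}(s)=b_h(s)$ and $b_{v,L}(s)=b_v(s)$. These collapse to $\delta(s)=b_{h,L}(s)-b_{h,W}(s)=0$. Substituting $\delta(s)=0$ removes the outcome-covariance term, whatever the value of $F_m(s)\bar F_m(s)$. This gives
\[
\mathbb{E}[\pi(s)]=\frac{1-\phi}{2}+2(1+\phi)\,\theta(s)\lambda(s),
\]
which is the displayed formula.

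For the ``if and only if'' claim, I subtract the hold $(1-\phi)/2$ from both sides. Since $\phi\ge 0$, the factor $2(1+\phi)$ is strictly positive. So the excess over the hold is strictly positive exactly when $\theta(s)\lambda(s)>0$. A product of two real numbers is strictly positive precisely when both factors are non-zero and share a sign, which is the stated condition. This is consistent with the last sentence of Proposition~\ref{prop:margin_anatomy}: with $\delta(s)=0$ the bracket there reduces to $\theta(s)\lambda(s)$.

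For the final sentence, a balanced public means $b_h(s)=\tfrac12$, so $\lambda(s)=0$ by \eqref{eq:theta_lambda_def}. The shading-times-lean term then vanishes for every value of $\theta(s)$, and $\mathbb{E}[\pi(s)]$ equals the hold, so shading adds nothing.

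There is no real obstacle here; the argument is a substitution followed by a sign argument. The only points needing care are the positivity of $1+\phi$, and reading ``balanced'' as $b_h(s)=\tfrac12$ in the sense of \eqref{eq:theta_lambda_def}, not the price-relative lean of Table~\ref{tab:terminology}.
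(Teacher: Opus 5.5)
Your proposal is correct and follows the paper's own route. The paper gives no separate proof and reads the corollary directly off \eqref{eq:margin_anatomy_profit}: set $\delta(s)=0$, use $2(1+\phi)>0$ for the sign equivalence, and note that $\lambda(s)=0$ removes the only remaining excess term. One small point: you do not need to distinguish two readings of ``balanced''. Under a symmetric vig the devigged price is $p^\ast=\tfrac12$, so in the no-push case the Table's $\lambda=b_h-p^\ast$ and $\theta=p^\ast-\bar F_m$ coincide exactly with \eqref{eq:theta_lambda_def}.
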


By making use of the identity $b_h(s)=F_m(s)\,b_{h,L}(s)+\bar F_m(s)\,b_{h,W}(s)$, one can show that $\mathrm{Cov}(B_h(s),\mathbf{1}\{M<s\}) = F_m(s)\bar F_m(s)\,\delta(s)$, meaning that the third term in \ref{eq:margin_anatomy} is equal (up to a constant) to the covariance between the home bet share and whether the home side loses:
\begin{equation}
\label{eq:margin_anatomy_cov}
Q(s)=\tfrac12+2\,\theta(s)\,\lambda(s)+2\,\mathrm{Cov}\big(B_h(s),\mathbf{1}\{M<s\}\big).
\end{equation}
Note that the second term of the right-hand side is a product of marginal expectations while the third term is a covariance. In other words, the book profit's excess factors into a product of marginals and a covariance term, which is standard for a joint expectation of random variables. Although both terms may be viewed as ``public bias'', the two terms have very distinct interpretations: $\theta(s) \lambda(s)$ is a directional bias that is theoretically observable before the match takes place, while $\mathrm{Cov}(B_h(s),\mathbf{1}\{M<s\})$ is outcome-covariant and can only be resolved after. 

It is also instructive is to consider the extreme case of a public that always backs the home side. Here the home bet share is clearly independent of the outcome, and $b_h(s) = b_{h,L}(s) = b_{h,W}(s) = 1$. Despite the fact that $\delta(s)=0$, the book can still profit above its hold if $F_m(s)>\tfrac12$. This is precisely the mechanism presented by \citet{levitt2004gambling} and the second term of \ref{eq:margin_anatomy_cov}. Because Levitt implicitly assumes independence between bet shares and outcomes, 
$\delta(s)=0$ and the product $\theta\lambda$ is the only mechanism by which the book can profit above its hold in his account.

%% file: sections/belief.tex
\subsection*{The profit driver as a misclassification probability}
Note that the profit-bias identity \eqref{eq:profit_bias_identity} makes no assumptions about how the public
chooses sides beyond the existence of the conditional expectations in \eqref{eq:b_def_general}. All market- and
behavior-specific structure enters through $Q(s)$, which depends on the joint distribution of $(B_h(s),M)$. Here we employ a simple public-belief model and show that it leads to a decision-theoretic interpretation of $Q(s)$.   



Consider the point spread setting where $M$ models the difference between home and away score, $s$ is the sportsbook's posted spread, and assume that the public forms a noisy, biased belief $Y$ about the (latent) expected outcome $X$:
\begin{equation}
\label{eq:belief_model_main}
Y = X + \varepsilon + V, \qquad\qquad M = X + U,
\end{equation}
where $\varepsilon$ is the public's systematic bias, $V$ is belief noise, and $U$ captures noise on the outcome that is assumed to be independent of $V$. We further assume that the public backs the home side whenever $Y>s$, such that the home team's bet share is captured by a representative agent: $B_h(s)=\mathbf{1}\{Y>s\}$. In other words, the public acts on the aggregate signal $Y$ and places its stake on one side -- $B_h(s)\in\{0,1\}$. With this selection rule, the conditional losing-side shares $b_{h,L}(s)$ and $b_{v,L}(s)$ of \eqref{eq:b_def_general} become conditional misclassification probabilities, and $Q(s)$ takes on a decision-theoretic interpretation. 


\begin{proposition}[Profit driver as a misclassification probability]
\label{prop:misclassification}
Under the public-belief model with selection rule $B_h(s)=\mathbf{1}\{Y>s\}$, the profit driver of the profit--bias identity satisfies:
\[
Q(s) = \Pr\big(\mathrm{public\ backs\ the\ losing\ side}\big).
\]
$Q(s)$ is thus the misclassification probability of the binary predictor $\mathbf{1}\{Y>s\}$ for the event $M>s$. As a result, the sportsbook's expected profit is an affine, strictly increasing function of the public's classification error in predicting the sign of $M-s$.
\end{proposition}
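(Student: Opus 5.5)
The approach is to substitute the selection rule $B_h(s)=\mathbf{1}\{Y>s\}$ directly into the definition \eqref{eq:Q_def_methods} of $Q(s)$ and recognize each term as a joint probability. Since $B_h(s)$ is now an indicator, its conditional expectation is a conditional probability: $b_{h,L}(s)=\mathbb{E}[\mathbf{1}\{Y>s\}\mid M<s]=\Pr(Y>s\mid M<s)$. Likewise, $B_v(s)=1-B_h(s)=\mathbf{1}\{Y\le s\}$, so $b_{v,L}(s)=\Pr(Y\le s\mid M>s)$. These conditional expectations are well defined whenever $\Pr(M<s)$ and $\Pr(M>s)$ are positive; if either is zero, the corresponding term of $Q(s)$ vanishes and can be dropped without affecting the argument.

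Next I would multiply each conditional probability by its conditioning probability, as in \eqref{eq:Q_def_methods}, to turn it into a joint probability:
\[
Q(s)=\Pr(Y>s,\,M<s)+\Pr(Y\le s,\,M>s).
\]
The two events $\{Y>s, M<s\}$ (the public backs home and home loses) and $\{Y\le s, M>s\}$ (the public backs the visitor and the visitor loses) are disjoint. Their union is exactly the event that the public's chosen side loses. Hence $Q(s)=\Pr(\text{public backs the losing side})$.

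Under the no-push assumption $\Pr(M=s)=0$, this union is the event $\mathbf{1}\{Y>s\}\neq\mathbf{1}\{M>s\}$ up to a null set, so $Q(s)$ is the misclassification probability of the predictor $\mathbf{1}\{Y>s\}$ for the label $\mathbf{1}\{M>s\}$. The final claim then follows by substituting into the profit--bias identity \eqref{eq:profit_bias_identity}, $\mathbb{E}[\pi(s)]=(1+\phi)Q(s)-\phi$, which is affine in $Q(s)$ with slope $1+\phi>0$.

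There is no real obstacle here. The only point needing care is the tie-breaking and push convention: I would assign $Y=s$ to the visitor side, consistent with $B_v=1-B_h$, and invoke $\pi_{\mathrm{push}}(s)=0$ so that $\{M\le s\}$ and $\{M<s\}$ coincide almost surely. The model structure $Y=X+\varepsilon+V$ and the independence of $U$ and $V$ are not needed for this statement; they matter only for later explicit evaluation of $Q(s)$.
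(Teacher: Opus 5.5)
Your proposal is correct and follows the paper's argument in Sec.~S4 step for step. You rewrite $b_{h,L}(s)$ and $b_{v,L}(s)$ as conditional probabilities of the indicator and multiply by $F_m(s)$ and $\bar F_m(s)$ to get $Q(s)=\Pr(Y>s,\,M<s)+\Pr(Y\le s,\,M>s)$, identify this disjoint union as the event that the public backs the losing side, and read off the affine, strictly increasing dependence from $\mathbb{E}[\pi(s)]=(1+\phi)Q(s)-\phi$. Your explicit handling of ties at $Y=s$ and of the null set $\{M=s\}$ is slightly more careful than the paper, which simply assumes $\Pr(M=s)=0$. You are also right that the latent structure $Y=X+\varepsilon+V$, $M=X+U$ is needed only for the closed-form evaluation of $Q(s)$, not for this statement.
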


\noindent Consequently, the book profits to the extent that the public's belief signal fails to classify the outcome. The proof of Proposition \ref{prop:misclassification}, as well as a closed-form evaluation of $Q(s)$ in terms of the belief and outcome distributions, are given in the \hyperref[sec:si]{SI} (Sec.~S4).

%% file: sections/goldilocks.tex
\subsection*{Conditions for the Existence of a ``Goldilocks Zone''}
\label{sec:goldilocks}

We ask when a posted spread $s$ is simultaneously attractive to the sportsbook and to a small bettor: that is, when the book earns a profit in expectation \emph{and} an individual bettor (whose wager is negligible relative to the public aggregate) can find at least one side with positive expected value at the posted price. We assume fractional spreads ($\pi_{\mathrm{push}}(s)=0$) and symmetric vigs ($\phi_h=\phi_v=\phi$ with $\phi\in(0,1)$). For notational convenience, we define:
\begin{equation}
\tau \;\equiv\; \frac{\phi}{1+\phi}\;\in\;\big(0,\tfrac12\big),
\label{eq:tau_def}
\end{equation}
such that $\tau$ and $1-\tau=\frac{1}{1+\phi}$ are symmetric about $\tfrac12$.


From (\ref{eq:profit_bias_identity}), the sportsbook's expected profit is positive if and only if:
\begin{equation}
Q(s) > \tau .
\label{eq:book_profit_condition}
\end{equation}
Note from (\ref{eq:Q_def_methods}) that $Q(s)$ is a convex combination of the two conditional losing-side shares, weighted by the outcome CDF:
\begin{equation}
Q(s) \;=\; \big(1-F_m(s)\big)\,b_{v,L}(s) \;+\; F_m(s)\,b_{h,L}(s).
\label{eq:Q_convex}
\end{equation}
Because $F_m(s)\in[0,1]$, it follows that $Q(s)$ is bounded between the two losing bet shares:
\begin{equation}
\boxed{\;\min\big(b_{h,L}(s),b_{v,L}(s)\big)\;\le\;Q(s)\;\le\;\max\big(b_{h,L}(s),b_{v,L}(s)\big).\;}
\label{eq:Q_sandwich}
\end{equation}
Thus, whether the book profits depends on where the threshold $\tau$ is situated relative to $b_{h,L}(s)$ and $b_{v,L}(s)$. 


\begin{proposition}[Book-profit regimes]
\label{prop:goldilocks_brackets}
Fix a spread $s$ and let $b_{\min}(s)=\min(b_{h,L},b_{v,L})$ and $b_{\max}(s)=\max(b_{h,L},b_{v,L})$. The book-profit condition \eqref{eq:book_profit_condition} obeys exactly one of three regimes:
\begin{enumerate}
\item[\textup{(i)}] \textbf{Always} ($b_{\min}(s)>\tau$): $Q(s)>\tau$ for every value of $F_m(s)$.
\item[\textup{(ii)}] \textbf{Never} ($b_{\max}(s)\le\tau$): $Q(s)\le\tau$ for every value of $F_m(s)$.
\item[\textup{(iii)}] \textbf{Threshold} ($b_{\min}(s)\le\tau<b_{\max}(s)$): as $F_m(s)$ varies from $0$ to $1$, $Q(s)$ increases monotonically from $b_{v,L}(s)$ to $b_{h,L}(s)$, crossing $\tau$ only once at: 
\[
F^\ast(s)=\frac{\tau-b_{v,L}(s)}{b_{h,L}(s)-b_{v,L}(s)}\;\in\;[0,1].
\]
The book profits on whichever side of this crossing puts more of the weight on the larger of the two shares:
\[
Q(s)>\tau
\quad\Longleftrightarrow\quad
\begin{cases}
F_m(s)>F^\ast(s), & \text{if } b_{h,L}(s)>b_{v,L}(s),\\[2pt]
F_m(s)<F^\ast(s), & \text{if } b_{h,L}(s)<b_{v,L}(s).
\end{cases}
\]
\end{enumerate}
In particular, $b_{\min}(s)>\tau$ is \emph{sufficient} and $b_{\max}(s)>\tau$ is \emph{necessary} for the book to profit at $s$.
\end{proposition}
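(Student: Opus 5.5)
The plan is to view $Q(s)$ through \eqref{eq:Q_convex} as an affine function of a single parameter. Holding the conditional losing-side shares $b_{h,L}(s)$ and $b_{v,L}(s)$ fixed, write $x=F_m(s)\in[0,1]$ and
\[
g(x)\equiv (1-x)\,b_{v,L}(s)+x\,b_{h,L}(s)=b_{v,L}(s)+x\big(b_{h,L}(s)-b_{v,L}(s)\big).
\]
This is linear in $x$, with endpoint values $g(0)=b_{v,L}(s)$ and $g(1)=b_{h,L}(s)$. The sandwich \eqref{eq:Q_sandwich} then says that $g$ ranges exactly over $[b_{\min},b_{\max}]$. The three regimes are mutually exclusive and exhaustive because they partition the possible positions of $\tau$ relative to the ordered pair $b_{\min}\le b_{\max}$: either $\tau<b_{\min}$, or $b_{\min}\le\tau<b_{\max}$, or $\tau\ge b_{\max}$.

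Regimes (i) and (ii) follow directly from \eqref{eq:Q_sandwich}.
\begin{itemize}
\item In regime (i), $Q(s)\ge b_{\min}>\tau$ for every $x$.
\item In regime (ii), $Q(s)\le b_{\max}\le\tau$ for every $x$.
\end{itemize}

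For regime (iii), first note that $b_{\min}<b_{\max}$, since otherwise the condition $b_{\min}\le\tau<b_{\max}$ could not hold. So the slope $b_{h,L}-b_{v,L}$ is nonzero and $g$ is strictly monotone. Solving $g(x)=\tau$ gives the stated $F^\ast(s)$. To confirm $F^\ast\in[0,1]$, I would split on the sign of the slope.
\begin{itemize}
\item If $b_{h,L}>b_{v,L}$, then $b_{v,L}=b_{\min}\le\tau<b_{\max}=b_{h,L}$. The numerator $\tau-b_{v,L}$ then lies in $[0,\,b_{h,L}-b_{v,L})$, so $F^\ast\in[0,1)$.
\item If $b_{h,L}<b_{v,L}$, both numerator and denominator are negative, and the same bound gives $F^\ast\in(0,1]$.
\end{itemize}
The equivalence $Q>\tau\iff x>F^\ast$ (positive slope) or $x<F^\ast$ (negative slope) follows from rearranging $g(x)-\tau=(b_{h,L}-b_{v,L})(x-F^\ast)$. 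This factorization gives exactly the stated case split.

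The main delicacy is the statement's wording that in regime (iii) $Q$ ``increases monotonically from $b_{v,L}$ to $b_{h,L}$.'' This is literally true only when $b_{h,L}>b_{v,L}$; otherwise $Q$ moves monotonically, in the decreasing direction, between these endpoints. I would phrase the proof as ``$Q$ moves monotonically from $b_{v,L}$ at $x=0$ to $b_{h,L}$ at $x=1$,'' which is the content actually used. I would also note the implicit assumption that $b_{h,L}$ and $b_{v,L}$ are held fixed while $F_m(s)$ varies.

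The final sufficiency and necessity claims then follow from the two sides of \eqref{eq:Q_sandwich}.
\begin{itemize}
\item Sufficiency: if $b_{\min}>\tau$, then $Q\ge b_{\min}>\tau$, and by \eqref{eq:book_profit_condition} the book profits.
\item Necessity: if $Q>\tau$, then $b_{\max}\ge Q>\tau$.
\end{itemize}
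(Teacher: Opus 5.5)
Your proof is correct and takes essentially the same route as the paper's proof in SI Sec.~S5. Both write $Q$ as the $F_m$-weighted convex combination of $b_{h,L}$ and $b_{v,L}$, use the sandwich for regimes (i) and (ii), and split on the sign of $b_{h,L}-b_{v,L}$ to place $F^\ast$ in $[0,1)$ or $(0,1]$ and fix the direction of the inequality. Your remark that $Q$ literally \emph{increases} in regime (iii) only when $b_{h,L}>b_{v,L}$ is a fair correction to the statement's wording, which the paper's proof handles only implicitly through its two-case treatment.
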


\noindent The proof of Proposition~\ref{prop:goldilocks_brackets} is in the \hyperref[sec:si]{SI} (Sec.~S5). The key idea is that book profitability is governed by the values of the conditional losing-side shares relative to the threshold $\tau$: the book always profits whenever the public's losing-side stake exceeds $\tau$ on \emph{both} outcomes ($b_{\min}>\tau$), while never profiting when it fails to exceed $\tau$ on \emph{either} ($b_{\max}\le\tau$). In the remaining case, book profitability is determined by the value of $F_m$ relative to $F^{\ast}$ at the posted spread $s$.

We define the ``Goldilocks Zone'' as the set of spreads at which the book \emph{and} a small bettor can profit:
\begin{equation}
\mathcal{G} \;\equiv\; \big\{s: Q(s) > \tau\big\}\;\cap\;\mathcal{B},
\qquad
\mathcal{B} \;\equiv\; \big\{s: \text{a negligible bettor has a positive expected value wager at} ~s\big\}.
\label{eq:goldilocks_def}
\end{equation}

The conditions under which a bettor may find a profitable wager have been previously derived \citep{dmochowski2023statistical}. Briefly, a unit wager nets $\phi$ on a win and $-1$ on a loss, such that a bet has positive expected profit only when its win probability exceeds $1/(1+\phi)=1-\tau$. The win probabilities of the home and away sides are given by $\bar F_m(s)$ and $F_m(s)$, respectively, leading to:
\begin{equation}
\mathbb{E}[\text{home bet}]>0 \iff F_m(s) < \tau,
\qquad
\mathbb{E}[\text{visitor bet}]>0 \iff F_m(s) > 1-\tau .
\label{eq:bettor_conditions}
\end{equation}
These conditions amount to $s$ lying in a \emph{tail} of the outcome distribution $F_m$. In other words, the bettor may profit only when the outcome is extreme enough to overcome the vig:
\begin{equation}
\mathcal{B} \;=\; \big\{s: F_m(s) < \tau\big\} \;\cup\; \big\{s: F_m(s) > 1-\tau\big\}.
\label{eq:bettor_set}
\end{equation}

\paragraph{Existence of a Goldilocks Zone.}
Combining the book profitability condtions of Proposition~\ref{prop:goldilocks_brackets} with \eqref{eq:bettor_set} yields the following (distribution-free) conditions on the existence of a non-empty Goldilocks Zone:
\begin{itemize}
\item \textbf{Sufficient.} If, for at least one $s$, the bet shares satisfy $b_{\min}(s)>\tau$ \emph{and} $F_m(s)$ lies in a tail ($F_m(s)<\tau$ or $F_m(s)>1-\tau$), then $s\in\mathcal{G}$ and the Goldilocks Zone is non-empty.
\item \textbf{Necessary.} Any $s\in\mathcal{G}$ must satisfy $b_{\max}(s)>\tau$ \emph{and} $F_m(s)$ in a tail; where $b_{\max}(s)\le\tau$, the book cannot profit and $\mathcal{G}$ is empty regardless of $F_m$.
\end{itemize}
For the threshold regime $b_{\min}(s)\le\tau<b_{\max}(s)$, non-emptiness of $\mathcal{G}$ requires that the tail constraint on $F_m(s)$ and the ``crossover'' constraint ($F_m(s)$ vs.\ $F^\ast(s)$) hold at the \emph{same} spread. Importantly, the quantities $F_m(s)$, $b_{h,L}(s)$, and $b_{v,L}(s)$ all covary with $s$, such that existence is determined by the joint trajectory $s\mapsto\big(F_m(s),b_{h,L}(s),b_{v,L}(s)\big)$, the object that we estimate in the forthcoming empirical section.  


%% file: sections/schematic.tex
\subsection*{Visualizing the profit driver and the Goldilocks Zone}

To visualize the theoretical results, we modeled the outcome
distribution $F_m(s)$ and the public's home bet share $b_h(s)$ with toy Gaussian
models (see \emph{Methods} for details). In the schematics of
Fig.~\ref{fig:Qprocess}, the horizontal axis is a \emph{candidate} point spread
that the sportsbook may post for a match whose median outcome is taken to be zero
(without loss of generality); moving the spread to the right (left) increases the
handicap on the home (away) team. The vertical axis is the profit driver $Q(s)$ --
the expected share of money on the eventual losing side -- and the book profits
when $Q(s)$ exceeds the vig-defined threshold $\tau$. We examine how
book profit responds to four types of public betting behaviors: a \emph{calibrated}
public whose beliefs are aligned with the outcome distribution (panel a); a
\emph{favorite-leaning} public whose bias toward the stronger team (the side the
sportsbook favors) places a preponderance of money on the favorite (panel b); a
\emph{square} public that allocates a higher bet share to the eventual losing side
(panel c); and a \emph{sharp} public that allocates a higher bet share to the
eventual winning side (panel d). It is important to note that the first two behaviors are independent
of the realized outcome, whereas the last two convey a statistical dependence
between betting and outcome.

A calibrated public confines the book's profitable region (i.e., the region of the
curve where $Q(s)$ exceeds $\tau$) to an interior band around the median ($|s|<\epsilon$;
Fig.~\ref{fig:Qprocess}a). Here the book profits \emph{only by pricing
accurately}: its profitability requires that the posted spread be near the true median. The
corresponding Goldilocks regions (green bands) are situated on either side of the median, 
where the posted spread differs from the median by a minimum amount without exceeding $\epsilon$.

A favorite-leaning public -- one that backs the favorite in excess of its
fair probability -- produces a qualitative change in book profit
(Fig.~\ref{fig:Qprocess}b): the book now profits at the tails of the
candidate spread where the public's bias is largest. Note that here the profit curve opens upward (U-shaped). This arises without dependence between betting and
realized outcome: it is a structural phenomenon, rooted in the
the public's tendency to overvalue favorites. The Goldilocks
region widens correspondingly, extending across the tails.

Introducing a dependence between betting and outcome produces
vertical shifts of the profit curve; we model it with a constant $\delta$
between the conditional bet shares (solid line, home loses; dashed line, home
wins). A \emph{square} public, whose money skews toward the eventual loser
(Fig.~\ref{fig:Qprocess}c), quantitatively increases the magnitude of the book's profit without greatly changing the extent of the Goldilocks Zone.
Conversely, a \emph{sharp} public reduces book profit across the range of spreads,
and in the schematic shown drives the book to a loss over a substantial interior
region (Fig.~\ref{fig:Qprocess}d); the Goldilocks region is correspondingly
limited to the extreme tails of the candidate spread.

Taken together, these panels relate book profitability to three mechanisms: (i) accurate forecasting of the outcome distribution; (ii) structural biases in public betting, such as over-valuing favorites; and (iii) any conditional dependencies between betting and outcome. The empirical section below attempts to shed light onto which of these mechanisms may manifest in real betting markets. 

\begin{figure}[t]
    \centering
    \includegraphics[width=\linewidth]{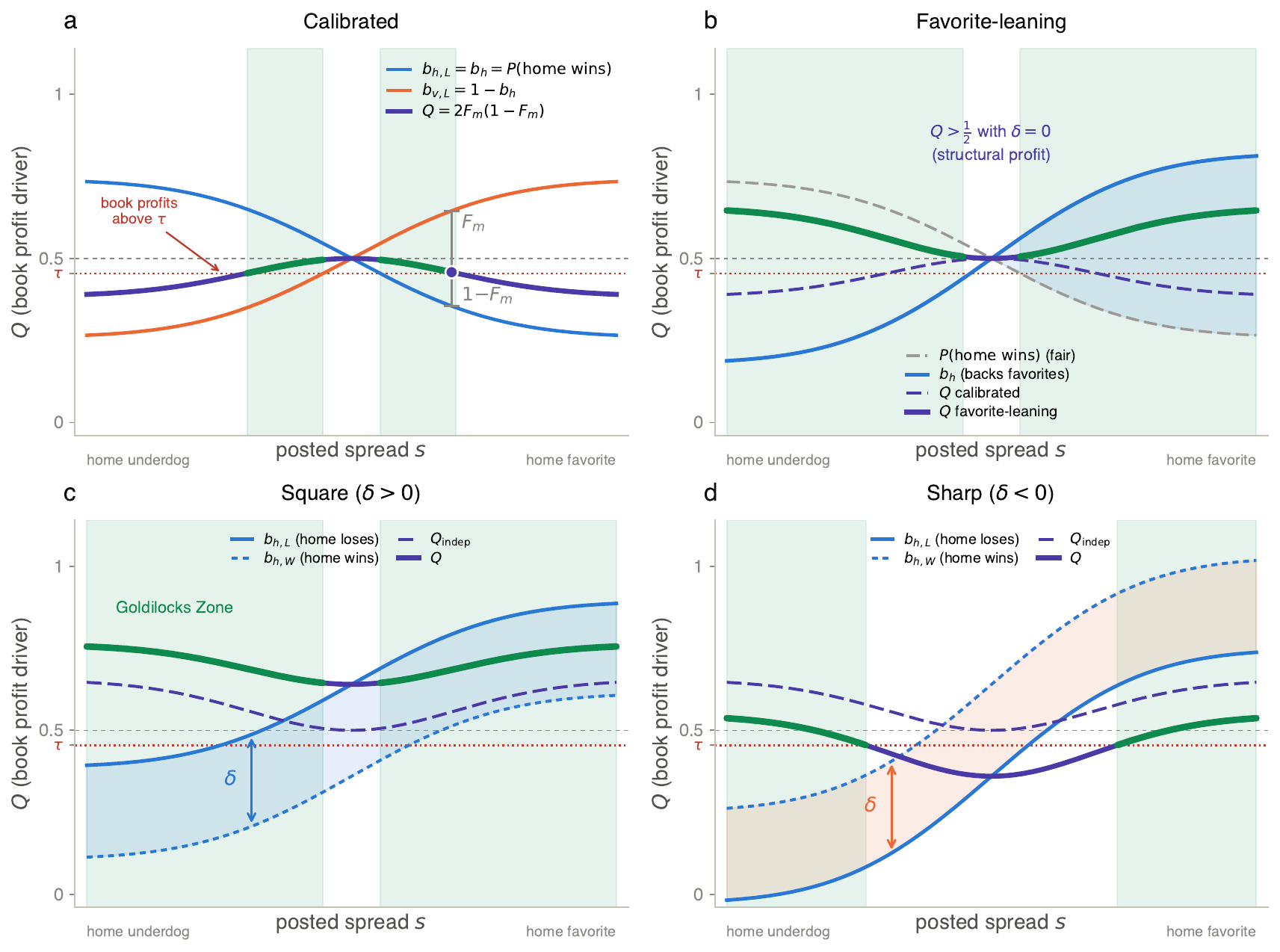}
    \caption{\textbf{The sportsbook profit driver $Q(s)$ across four public
    behaviours}. In every panel
    $Q(s)=F_m(s)\,b_{h,L}(s)+\bar F_m(s)\,b_{v,L}(s)$ is the $F_m$-weighted average
    of the home losing-side share $b_{h,L}=\mathbb{E}[B_h\mid M<s]$ and the visitor
    losing-side share $b_{v,L}=\mathbb{E}[B_v\mid M>s]$; the dotted line marks the
    threshold $\tau$, above which the book profits. \textbf{(a)}~A \emph{calibrated} public bets
    each side at its actual win probability. The bracketed connector shows $Q$
    dividing the gap between the two shares at a ratio of $F_m:(1-F_m)$. The inverted-$U$
    curve is upper bounded by $\tfrac12$ and falls below the profit threshold at the
    tails of the posted spread. \textbf{(b)}~A \emph{favorite-leaning} public backs
    favorites in excess of their fair cover probability. Even with the betting
    fully independent of outcome ($\delta=0$), $Q$ now opens up and is markedly increased at the tails of the candidate
    spread. \textbf{(c)}~A \emph{square} betting public systematically allocates a
    larger bet share on the eventual losing side ($\delta=b_{h,L}-b_{h,W}>0$),
    increasing book profit above a purely favorite-leaning public. \textbf{(d)}~A
    \emph{sharp} betting public allocates a greater bet share on the eventual winning
    outcome ($\delta<0$), driving the book profit towards the threshold $\tau$. In all
    panels, the \emph{Goldilocks} region (green) marks the set of candidate spreads
    where both an individual bettor and the book profit. For a calibrated public (a), it is limited to two interior bands; a
    favorite-leaning public (b) yields $Q$ above $\tau$ across the distribution tails
    and thus widens the Goldilocks Zone. Square
    dependence (c) increases the magnitude of the book's profit without altering the
    extent of the Goldilocks Zone; with a sharp betting public (d), the zone is relegated to more extreme spreads. All curves are schematics generated with
    Gaussian models for both outcome and public belief.}
    \label{fig:Qprocess}
\end{figure}

%% file: sections/empirical.tex
\input{sections/empirical/simpson}
\input{sections/empirical/movement}
\input{sections/empirical/lineshare}
\input{sections/empirical/accuracy}

%% file: sections/empirical/simpson.tex
\subsection*{Empirical results}

To investigate the relationship between sportsbook profit, public betting, and game
outcomes, we collected Major League Baseball (MLB) market prices and bet shares at a major North American sportsbook over an eleven-week duration. The specific market analyzed here, termed the ``run line,'' fixes the favorite's handicap at $1.5$ runs and instead expresses the teams' relative strength through the payouts $\phi_h$ and $\phi_v$. Note that this is unlike a standard point spread market where payouts on the two sides are approximately equal. The bettor's task is thus to predict whether the home team will win by more than (or
lose by less than) $1.5$ runs at a price reflecting the perceived difference in team strengths. Prior tests of Levitt's model have employed empirical bet shares (sometimes termed betting ``splits'') \citep{paul2007sportsbook,paul2008nba,paul2012nhl,humphreys2013homeunderdog}. Here we add a temporal dimension, observing each line and its associated bet shares from posting to game onset, as in previous high-frequency line-movement studies \citep{simon2024inefficient,gandar1998informed}.

We asked the following questions: (i) Do bet shares covary with realized outcomes? (ii) Does the bet share-outcome covariance evolve across time? (iii) Do bet shares drive prices, or do prices drive bet shares, or both? (iv) What are the contributions of price shading, public lean, and outcome covariance to the book's profit? 


\paragraph{Bet share-outcome dependence in MLB run lines: a Simpson's paradox}
Across $n=1{,}139$ games, the betting public displayed a striking preference for the match favorite:
the home bet share varies sigmoidally with the sportsbook's implied win
probability (Fig.~\ref{fig:composition}a), increasing from below
$25\%$ when the away team is strongly favored to above $80\%$ when the home team is.
In contrast, the sportsbook's implied home win probabilities span a compressed
central range ($0.39$--$0.67$).

We computed the home team's bet share separately for games that the home
team won versus lost against the spread. This revealed an apparent discrepancy: in
games where the home side lost the proposition, the home bet share was $5.8$
percentage points higher ($56.4\%$ versus $50.6\%$; game-level bootstrap
$p=0.0004$; $n_L=584$, $n_W=555$). This finding may be interpreted as evidence
for a square betting public. However, stratifying the analysis by the identity of the favorite (home versus away favorite)
removed the effect: the home bet share showed no significant dependence on the
realized outcome in either stratum, and both point estimates were small
(home favorites, $\widehat\delta=-1.5$ pp,
$p=0.51$, $n=608$; away favorites, $\widehat\delta=+0.6$ pp, $p=0.79$, $n=531$).

The apparent dependence is thus a Simpson's paradox: a correlation in pooled data that is removed upon conditioning on a confounding
variable. Here the confounder is the identity of the favorite: when the sportsbook designates the home team as the favorite, both the home bet share $b_h$ \emph{and} the probability of the home team losing against the spread $F_m(s)$ are increased. The latter is due to the fraction of home team losses being slightly larger than that implied by book prices. We elaborate on the implications of this structural form of book profit in the \emph{Discussion}.

\begin{figure}[t]
    \centering
    \includegraphics[width=\linewidth]{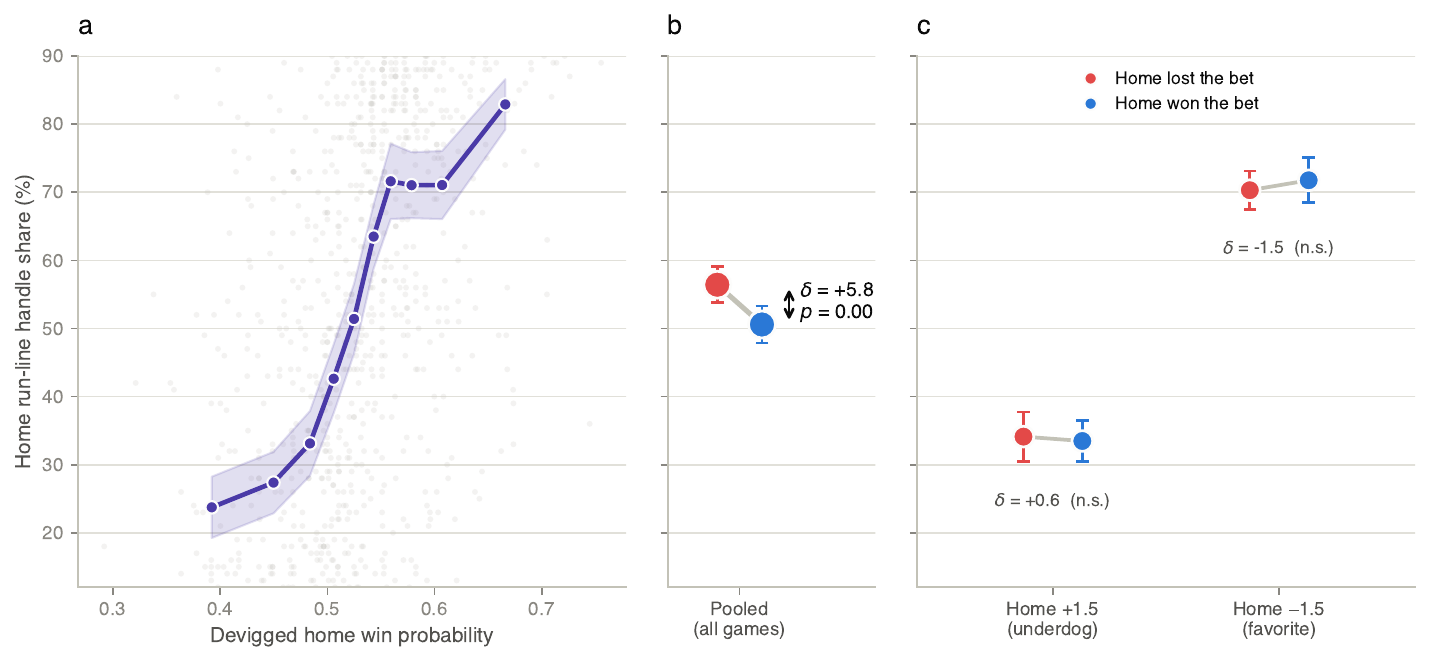}
    \caption{\textbf{A Simpson's paradox explains the apparent bet share-outcome covariance in the MLB run line market.} We collected market prices and bet shares from the MLB run
    line market across $N=1{,}139$ games. \textbf{(a)}~The public shows a striking preference for
    favorites: the home bet share varies sigmoidally with the price-implied home win probability; light grey markers denote individual games, while the curve shows the binned
    mean with $95\%$ confidence band. \textbf{(b)}~Computed over all games, the home bet share appears to covary with the realized outcome: $\widehat\delta=\widehat b_{h,L}-\widehat b_{h,W}=5.8$ points ($p=0.0004$): more money is wagered on the home side in games that the home team eventually lost ($\widehat b_{h,L}$, red) than won ($\widehat b_{h,W}$, blue).
    \textbf{(c)}~However, stratifying the analysis by the identity of the favorite eliminates the effect: $\widehat\delta$ is
    $-1.5$ for home favorites and $+0.6$ for home underdogs (both n.s.). Because a home favorite attracts a majority of the bet share \emph{and} wins the proposition less often than the price-implied win probability, the pooled bet share-outcome covariance is spurious.}
    \label{fig:composition}
\end{figure}

%% file: sections/empirical/movement.tex
\paragraph{Dynamics of book profit and public belief}
Public betting and sportsbook price setting operate as a closed loop, producing fluctuations in both quantities prior to the start of the game. Focusing on the MLB run line market, we collected hourly samples of the point spread $s$, payouts $(\phi_h$, $\phi_v$), and bet shares $(b_h, b_v)$ in the 24 hours leading up to game onset. Within each stratum (home favorite versus home underdog), we calculated the book's profit driver $Q(s)$ as well as the public's outcome dependency $\delta(s)$ in non-overlapping 6 hour windows.

The book's profit driver evolves distinctly across the two strata. For matches with a home underdog, $\widehat Q$ increases from $46.9\%\pm2.6$ at posting to $52.0\%\pm2.1$ at game onset -- the initial ``sharp'' betting is not sustained. In matches with a home favorite, the losing side share is reliably above 50\% throughout ($54.4\%\pm2.4$ to $53.8\%\pm2.0$, mean $\pm$ SEM).

We asked whether the outcome-conditioned bias $\delta$ exhibits significant movement across time, employing a linear mixed model with time and outcome as factors and the home bet share as the dependent variable. A significant interaction between time and outcome indicates that public belief carries information about the eventual outcome that is not constant over the pre-game window -- a dynamic departure from Levitt's implicit independence. For home underdogs, we found a significant decreasing slope, indicating that the home bet share decreases for games in which the home team subsequently wins against the spread ($\widehat b_{h,W}$ falling from $\approx44\%$ a day out to $\approx36\%$ at first pitch; game-clustered bootstrap $p=0.038$, cluster-robust $p=0.038$, $n=256$ games). This finding is suggestive of a square betting pattern that emerges closer to the start of the game. The corresponding test for home favorites was non-significant (game-clustered bootstrap $p=0.85$, $n=288$ games). We note the two strata's slopes do not significantly differ (three-way outcome $\times$ time $\times$ stratum interaction, game-clustered bootstrap $p=0.12$) -- the dynamic departure from independence is specific to home underdogs.

\begin{figure}[t]
    \centering
    \includegraphics[width=\linewidth]{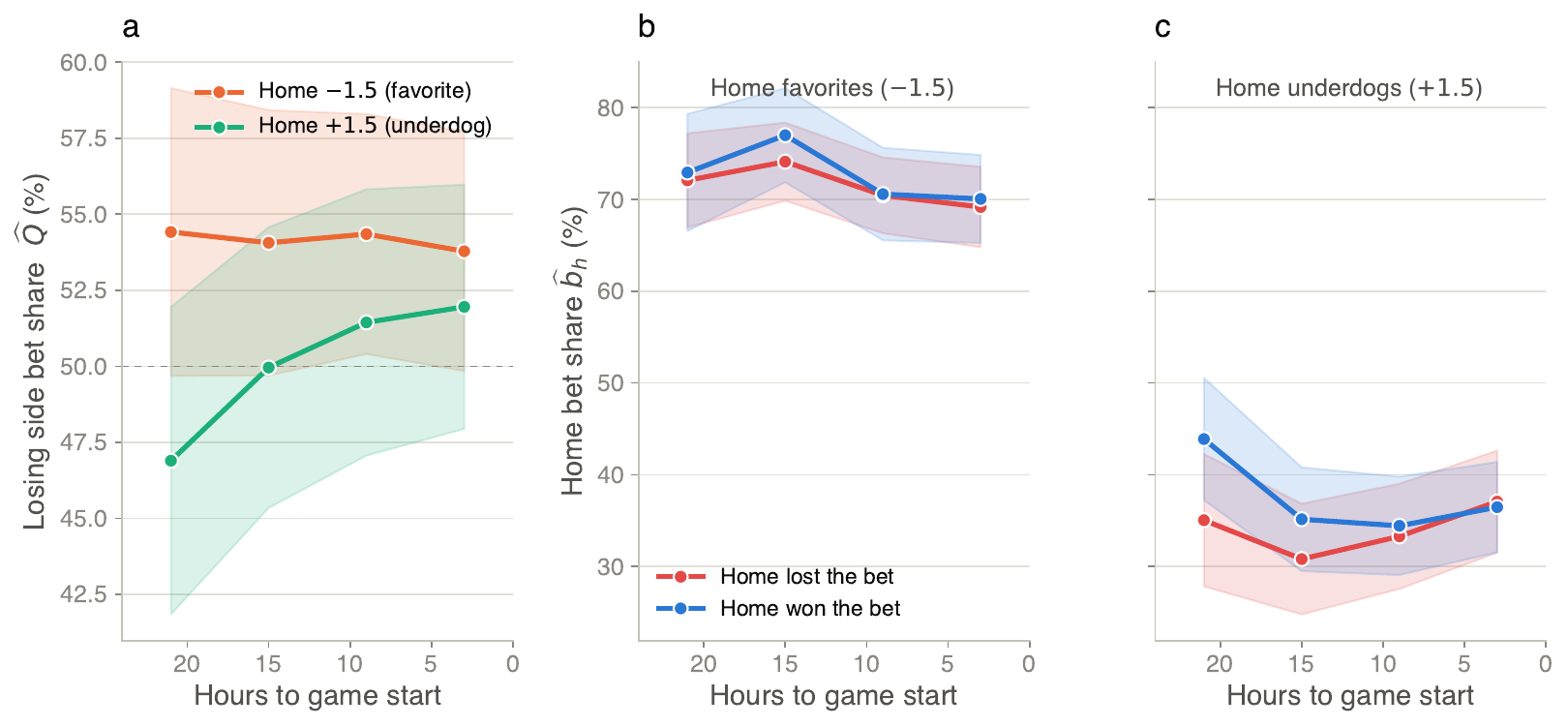}
    \caption{\textbf{Temporal evolution of public bias and book profit during the pre-game window.} 
    We collected hourly samples of public bet shares and prices for the MLB run line market. To gain insight into the temporal dynamics of book profit and public belief, we then estimated the fraction of share on the losing-side $\widehat{Q}$ in non-overlapping six hour windows terminating at game onset. The analysis was stratified into home-favorite ($N=288$) and home-underdog ($N=256$) games.
    \textbf{(a)}~The share of money on the
    eventual losing side $\widehat Q(t)$ by stratum: for home underdogs (green), the losing share increases
    from $46.9\%\pm2.6$ to $52.0\%\pm2.1$, while for home
    favorites (orange) it remains stable and above 50\% ($54.4\%\pm2.4$ to $53.8\%\pm2.0$). \textbf{(b,c)}~The home bet share in games that the
    home side eventually \emph{lost} ($\widehat b_{h,L}$, red) versus \emph{won}
    ($\widehat b_{h,W}$, blue) the closing bet, for home favorites (b) and home
    underdogs (c); the gap between the two curves is the outcome-conditioned bias
    $\widehat\delta(t)$. For games with a home underdog, the home bet share separates by outcome early ($t=-24$ to $t=-18$h: $\widehat\delta = -8.9$ points; more home money on the eventual winner) but converges by the start of the game. The observed change in $\widehat{\delta}$ is significant (linear mixed model with time and outcome as factors, time $\times$ outcome interaction: bootstrap $p=0.038$, $n=256$), suggesting an increase in square betting leading up to the start of the game. Games with a home favorite show little separation at any time ($p=0.85$); the difference between the two strata's slopes is itself not significant ($p=0.12$). }
    \label{fig:movement}
\end{figure}

%% file: sections/empirical/lineshare.tex
\paragraph{Feedback between book pricing and public betting}
The public-belief model treats the posted price as an input to the public's wager
rather than a response to it -- this ordering can be tested directly. We estimated
Jord\`a local projections \citep{jorda2005estimation} of each series on a one-step
innovation in the other (\eqref{eq:lp}; see \emph{Methods}). A local projection
regresses the level of one variable at horizon $h$ on an innovation in the other,
yielding the impulse response horizon by horizon rather than extrapolating it from
a fitted dynamic model. A game fixed effect absorbs the latent state that sets both
the opening price and the public's prior, and a pre-move momentum control blocks the
reverse channel, so that under sequential ignorability the horizon-$h$ coefficient is
the response to the move itself.

On the present dataset, the two directions were markedly asymmetric. A move in the book's line was followed
by a small ``fade'' of the public: a one-percentage-point increase in the devigged home
win probability lowered the home bet share by roughly $0.2$ points, significant at
horizons of one to five hours and peaking at $h=4$ ($\widehat\beta=-0.20$,
$t=-3.11$, $p=0.002$, $n=1090$ games; Fig.~\ref{fig:lp}a). As expected, when wagering on the home side
becomes more expensive, the public's tendency to do so lowers. The
converse channel, however, was absent: the line's response to a one-point move in the home
share was flat at zero and insignificant at every horizon
($|\widehat\beta|\le0.010$ points, all $p>0.27$, $n=1103$ games;
Fig.~\ref{fig:lp}b). The price thus leads the money rather than the reverse,
matching the ordering assumed by the belief model.

\begin{figure}[t]
    \centering
    \includegraphics[width=\linewidth]{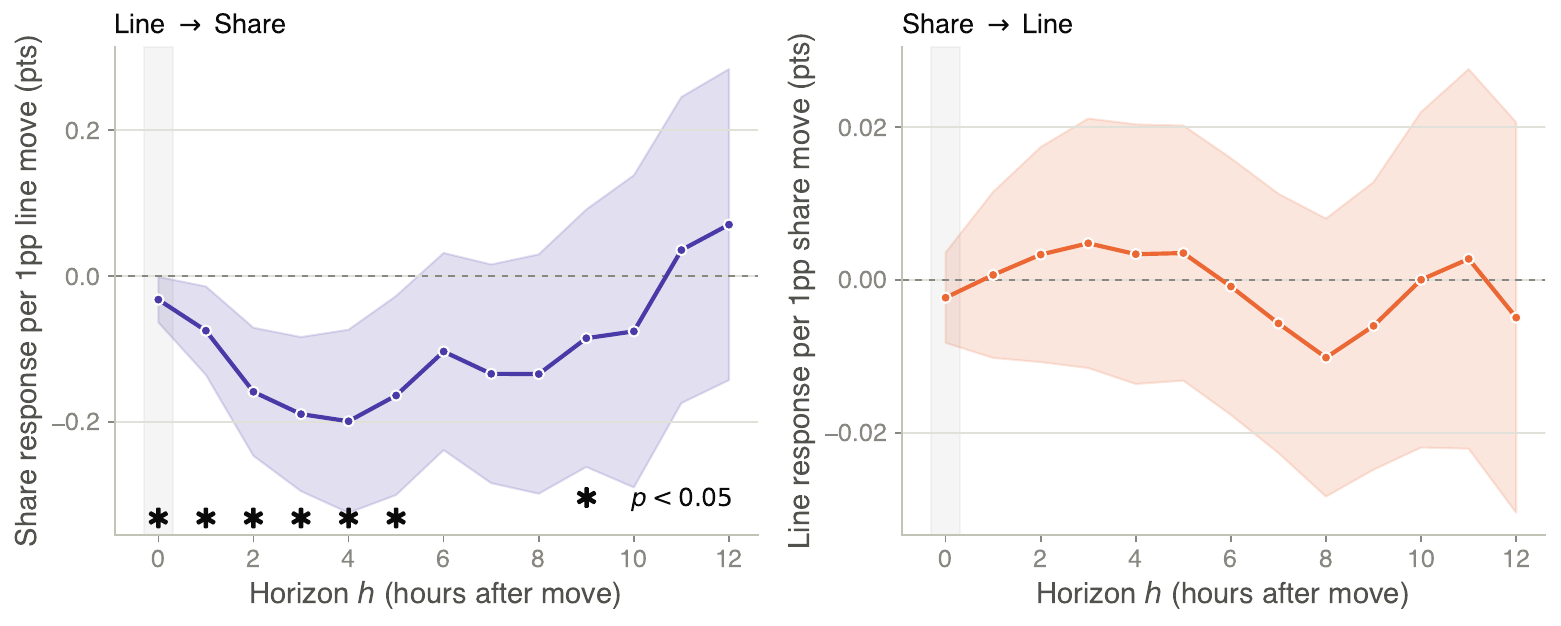}
    \caption{\textbf{Book pricing drives public betting, but not the reverse.} To investigate causal coupling between sportsbook pricing and public betting, 
    we computed Jord\`a local projections of the book price and bet share time series across on an hourly grid ($n=1103$ games; game fixed effect, pre-move momentum
    control). The response of each series at horizon $h$ (number hours after the change in the other signal) is shown with $95\%$ CI, and the shaded strip indicates the
    contemporaneous $h=0$. \textbf{(a)}~\emph{Line $\to$ share:} a $1$-percentage-point
    increase in the devigged home win probability is followed by a small but
    significant reduction of the home bet share ($\widehat\beta\approx-0.20$ points, $t=-3.11$, $p=0.002$, $n=1090$ at the $h=4$ peak; significant at
    $h=1$--$5$; dark-ringed markers).
    \textbf{(b)}~\emph{Share $\to$ line:} in contrast, the line does not significantly respond to a $1$-point change in home bet share
    ($|\widehat\beta|\le0.010$ points, $p>0.27$ at every horizon, $n=1103$). Thus, at least in the present data, the book does
    not reprice to changes in observed bet shares; the price drives the public's belief, not the reverse. }
    \label{fig:lp}
\end{figure}

%% file: sections/empirical/accuracy.tex
\paragraph{The book's profit margin is indistinguishable from its hold.}
Propositions~\ref{prop:margin_anatomy} and~\ref{prop:margin_anatomy_asym} dictate that a book profits beyond its hold
if public betting favors a side that is overpriced, or if bet shares covary negatively with the realized outcome. 
We asked whether either of these two conditions may be present in our dataset. The book earned a mean profit
$5.20\%$ of handle per game ($95\%$~CI $[0.88,9.35]$, $N=1{,}139$; Fig.~\ref{fig:accuracy}a), compared to
the $4.39\%$ hold that even a perfectly balanced book would collect -- on this sample, the book's profit is thus indistinguishable from the hold. Moreover, as is evident in the figure, the per-game profit exhibits very high dispersion (s.d.\ $73.8\%$ of handle). 

Stratifying games by the identity of the favorite, the bet share on the favored side is $+29.6$~pp higher when the home side is the
favorite and $-24.5$~pp higher when the visitor is favored (Fig.~\ref{fig:accuracy}b). Public betting on the favorite thus exceeds the book's implied price by 24--30 percentage points, reflecting a bias that has been previously well documented \citep{levitt2004gambling,paul2007sportsbook,franck2011sentimental}. Note from Corollary~\ref{cor:shading_needs_lean} that a public lean requires a corresponding book shading to yield excess profit. Thus, the empirical finding that the book does not profit above its hold despite the strong favorite-bias suggests that the book's prices are calibrated.

To test this, we estimated the book's mispricing $\hat\theta=p^\ast-\bar{F}_m$ across the price range and contrasted it with the case of a perfectly calibrated book. We found no departure from calibration: the average mispricing is $+0.48$~pp ($95\%$~CI $[-2.42,+3.38]$), and the largest local departure, $5.8$~pp, is within the $9.4$~pp that a calibrated book would produce by chance at this sample size ($p=0.54$). The posted prices are indeed consistent with being calibrated (but see the Discussion for limitations).

For each excess-profit-generating mechanism (shading $\times$ lean, outcome covariance), we asked what the largest level of that mechanism is that remains consistent with the observed data. For shading, we assumed that the book shades in the maximally profitable direction -- the side that the public backs is priced as more likely to cover than it truly is -- and calculated the largest shading whose implied profit margin still falls inside the book profit confidence interval. Because the public's lean is large, (counterfactual) shading would be highly profitable: $1.35$\% of handle per percentage point. Consequently, the upper bound on price shading is tight: at most $3.7$~pp (Fig.~\ref{fig:accuracy}c), meaning that the devigged book price departs from the true cover probability by no more than $0.037$ in the direction of the public lean.

We repeated the analysis for outcome-covariance, calculating the largest $\delta$ compatible with the observed book profit. The bound is again modest: $5.2$~pp (Fig.~\ref{fig:accuracy}d), implying that the public's home bet share differs by at most $5.2$ percentage points between games won versus lost by the home side. 

\begin{figure}[t]
    \centering
    \includegraphics[width=\linewidth]{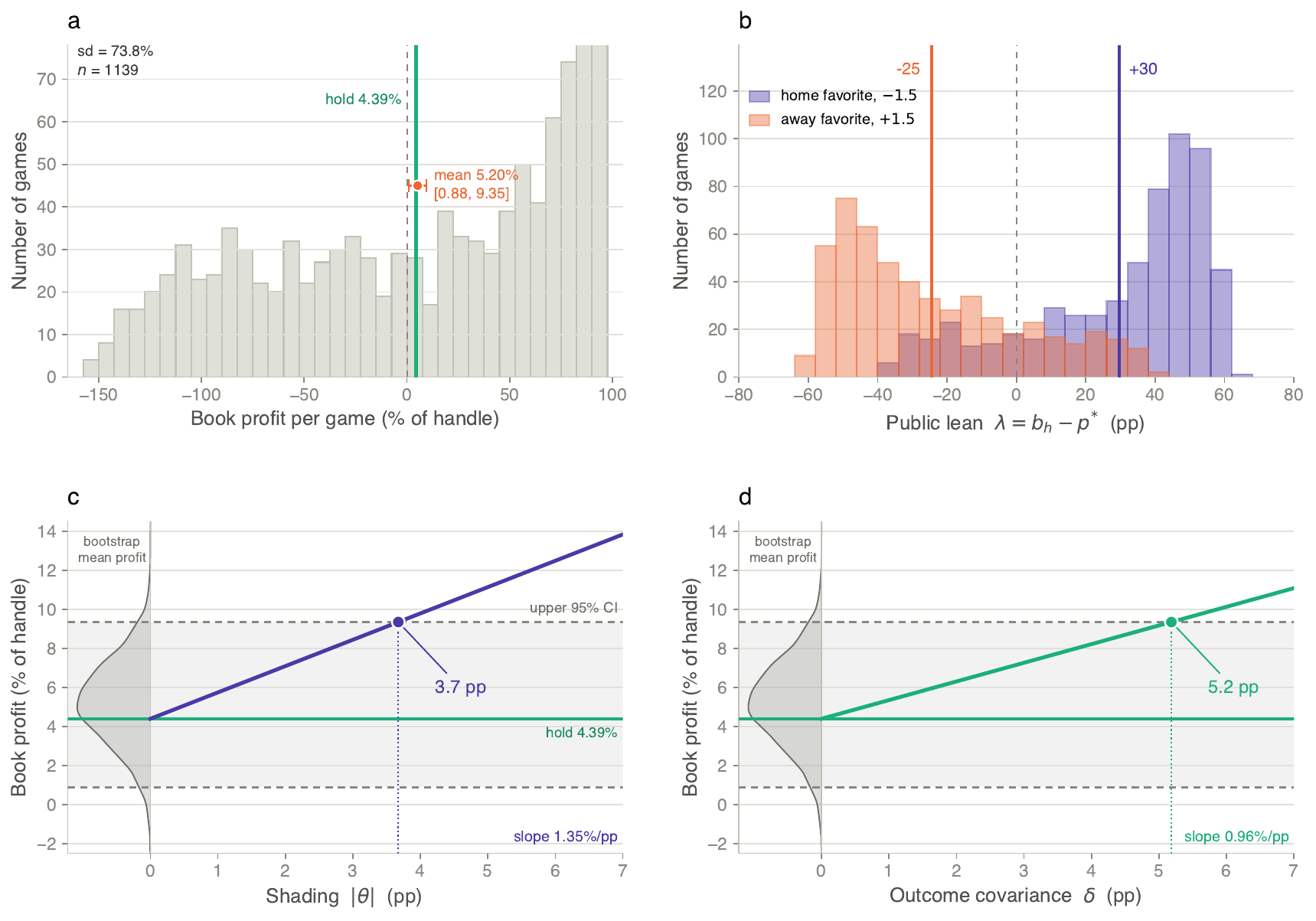}
    \caption{\textbf{The book's profit margin is indistinguishable from its hold, despite a
    large public lean toward favorites.} \textbf{(a)}~Reconstructed book
    margin per game, shown as a percentage of that game's handle ($N=1{,}139$). The mean profit margin of $5.20\%$
    ($95\%$~CI $[0.88,9.35]$, orange) is statistically indistinguishable from the
    $4.39\%$ margin that a perfectly balanced book would collect from the hold alone (green),
    though individual games are widely dispersed (s.d.\ $73.8\%$).
    \textbf{(b)}~The public lean $\lambda=b_h-p^\ast$, split by the identity of the favored side.
    Betting is highly skewed to the favorite: $+29.6$~pp where the home side is the
    favorite, $-24.5$~pp where the visitor is favored. \textbf{(c,d)}~Upper bounds on the two mechanisms that
    the book may exploit to profit beyond the hold. The grey density is the bootstrap distribution
    of the mean profit margin, with its $95\%$ CI interval marked; the sloped line denotes the excess profit margin that the book earns as a function of the level of the corresponding mechanism (either price shading or outcome-covariance). The largest values of shading and outcome covariance that are compatible with the empirical sample are thus marked by the lines' crossing of the upper CI:  $3.7$~pp of shading aligned with the lean, and $5.2$~pp of outcome covariance.}
    \label{fig:accuracy}
\end{figure}

%% file: sections/discussion.tex
\section*{Discussion}

%
%
%
%
%

%
%
We argue that Levitt's influential model of bookmaker profit \citep{levitt2004gambling} assumes that public betting is independent of the realized outcome, and then generalize the model of sportsbook profit to allow betting-outcome dependence. Importantly, this leads to an expression for bookmaker profit that is affine in the public's bet share on the eventual losing side: the profit-bias identity. That central object of the identity may be understood in the framework of decision theory, such that bookmaker profit is synonymous with the public's misclassification error of the outcome relative to the book's line. This places the book's accounting and the bettor's estimation problem on a common footing, since the same quantiles of the outcome distribution that govern one govern the other \citep{dmochowski2023statistical}. There is thus a direct link between how much the book profits and how biased the public's belief of the outcome is -- at the posted spread. A profit-maximizing book seeks to identify the price which maximizes the share of betting on the losing side but weighted by the outcomes of the corresponding loss. This is inherently a three-dimensional optimization problem -- for each candidate point spread, the profit-optimizing book must forecast (i) the share of bets on the losing side if home wins, (ii) the share of bets on the losing side if visitor wins, and (iii) the probability of the home side covering. Critically, (i) and (ii) do \emph{not} generally sum to one: their sum is $1+\delta(s)$, and equals one precisely when the public's money is statistically independent of the outcome. That independence is the scenario treated by Levitt, whose profit expression follows as a special case of the profit-bias identity. Generalizing the model does not, however, overturn his conclusion: that books shade prices to exploit predictable public bias remains consistent with the profit-maximizing view of fixed-odds pricing \citep{kuypers2000information} and with evidence that books do not move lines to equalize the money on the two sides \citep{paul2007sportsbook}. What the present work adds is that the proximate driver of book profit is the public's misclassification error of the outcome relative to the posted line.  

A surprising finding is that the book can profit above its hold even in the absence of statistical dependence between outcomes and betting. The bookmaker profit decomposes into three "channels": (i) its hold, (ii) a structural profit term that is positive when the public leans towards the same side that the book's prices are shaded, and (iii) the covariance between betting and outcomes. Perhaps counterintuitively, we show that public lean and book shading may align even if the public has no information pertaining to the realized outcome. As long as the side with the higher bet share has a lower probability of winning than that implied by its price, the book earns a structural profit in addition to the hold. The lean that drives this channel is itself well documented, and is usually attributed to the loyalties and sentiment of a recreational public rather than to information \citep{avery1999sentiment, franck2011sentimental, stanek2017homebias}. The mechanism is complementary to the classical account of bookmaker margins, in which both the margin and the favorite--longshot bias arise from the threat of privately informed insiders \citep{shin1991optimal, shin1992prices}; here no informed trader is required, only a marginal taste for favorites meeting a handicap that makes that taste costly. That bias takes the opposite sign in parimutuel racetrack betting, where longshots are overbet \citep{thaler1988anomalies, snowberg2010explaining, ottaviani2008favorite}; the favorite preference we observe is its fixed-odds, team-sport counterpart, of a piece with earlier study of the baseball betting market \citep{woodland1994baseball}. Note that the conditions for this second channel of book profit are known before the actual outcome is observed: the collected bet shares and likelihood of each side winning the bet. In order for the book to profit from the third channel, it must elicit a preponderance of money on the side that loses the match (presumably by identifying a price at which this occurs systematically). This third channel cannot be resolved until after the game has concluded. 

Another surprising aspect of our work pertains to our empirical investigation of the MLB run-line market at a major North American sportsbook, where we discovered that a Simpson's paradox can spuriously indicate betting-outcome dependence. When pooling all matches, we observed a significant majority of the public's money on the eventual losing side (Fig.~\ref{fig:composition}b). However, this effect is completely eliminated once stratifying the data by the identity of the favorite: in games with a home underdog, there was no significant difference between the bet share on the losing versus winning side (Fig.~\ref{fig:composition}c). The same finding was observed in matches where the home side was favored (Fig.~\ref{fig:composition}c). Yet when pooling the two strata, a seeming ``square'' public betting pattern emerges. The resolution to this Simpson's paradox is that the identity of the favorite drives both public betting tendencies (because the public prefers favorites) \emph{and} the eventual outcome (because the favorites win less often than implied by the book's prices). The finding of more money on the eventual losing side when aggregating all games is thus a composition artifact. Importantly, this result is a caution to tests of market efficiency and betting-outcome dependence, which are often performed on pooled samples \citep{paul2002totals, paul2011nfl, humphreys2013homeunderdog}. The excess-share term $2F_m\bar F_m\delta$ of \eqref{eq:Q_excess} makes the separation explicit: what a pooled test reads as behavioral dependence is the sum of a within-stratum term and a composition term, and only the former carries the interpretation usually assigned to it. Our findings point to the importance of conditioning such analyses on match attributes. 

Our empirical analysis reconstructed book profit over $1{,}139$ games and found that the resulting profit margin does not vary from that of the book's guaranteed hold (Fig.~\ref{fig:accuracy}a). This is despite very strong evidence for the public's bias towards the match favorite (Fig.~\ref{fig:accuracy}b). From the second channel of book profit \eqref{eq:margin_anatomy_profit}, a margin no larger than the hold despite so pronounced a lean implies that the book's price shading must be small: the data bound it to at most $3.7$~pp in the direction of the lean, and a direct check of the posted prices shows no detectable departure from calibration. We read this as a null on shading rather than as a demonstration of accurate pricing. Expected profit depends on the mispricing only through its product with the lean, so a margin test carries power only against mispricing that aligns with where the money sits: in simulation on our own design, $5$~pp of aligned mispricing is detected $88\%$ of the time, against $6\%$ when the same mispricing is oriented independently of the lean. The direct check is limited in a complementary way, since at this sample size a perfectly calibrated book would itself generate local departures of up to $9.4$~pp by chance. What we bound is therefore Levitt's mechanism specifically; a book could misprice substantially in ways unrelated to the public's lean and leave no trace in its margin. Because the book can only profit from a public lean towards favorites if the prices are also shaded towards favorites (Cor.~\ref{cor:shading_needs_lean}), the absence of detectable shading nullifies the potential additional profit. Interestingly, this scenario finds the book taking on additional risk -- the variability in profit margin is enormously higher than that of the hold itself -- without ``reward''. The expected profit is no larger than the risk-free hold, but comes with a much wider confidence interval.

One limitation of the present study is that the treatment is limited to two-outcome markets, whereas several prominent markets are three-way (e.g. European football home-draw-away). As such, extending our framework to multi-way markets such that the profit-bias relationship can be quantified represents an avenue for future research. Our empirical results are best understood as demonstrations of the proposed theoretical framework rather than inferential claims about the economics of modern-day sportsbooks. We analyzed data from one sport and one market (albeit both major) and the empirical findings serve to demonstrate that key objects such as $Q(s)$, $\delta$ may be estimated from data and that they convey insight into that specific market. The number and size of sports betting markets is vast, and we do not aim to characterize how those objects vary across the spectrum of sports and bet types. Relatedly, public bet-share data are scarce. Here we collected hourly data from a single book over several months, but quantities such as the extent of the Goldilocks Zone require the joint distribution of bet share and outcome over many games and remain hard to estimate. Prediction markets -- whose order books are an analogue of the bet share -- may generate richer data \citep{wolfers2004prediction, snowberg2013prediction}, but it is unclear how neatly this may map to sportsbook markets. A further caveat is that the book's margin is reconstructed rather than observed: the splits record shares rather than dollars, so each game enters per unit of its own handle, and the aggregate depends on a volume distribution we do not see. Finally, with the exception of the local-projection analysis (Fig.~\ref{fig:lp}), our treatment is non-causal. Whether pre-game movement of the kind we observe reflects the arrival of informed money is the subject of an extensive line-movement literature \citep{gandar1998informed, krieger2013price, simon2024inefficient, croxson2014information}.
On the other hand, betting markets likely exhibit a deep causal structure linking information arrival, book prices, and public betting tendencies; as such, a causal account of how book prices and public beliefs evolve is an important direction for future work.

 The profit-bias identity naturally organizes betting markets by the properties of the conditional losing shares. In particular, we define the Goldilocks Zone as the set of prices at which the sportsbook \emph{and} a negligible bettor are simultaneously positive in expectation. This requires that the public's losing-side share exceeds the book's break-even threshold while the outcome distribution is sufficiently removed from the median. 
 A Goldilocks Zone represents an inefficiency that still allows the book to profit, and thus connects the present work to the long-standing debate over betting market efficiency \citep{sauer1998economics, gray1997testing, vandenbruaene2022efficient}. Whether such a zone can be exploited in practice depends on the same constraints that limit arbitrage in financial markets \citep{shleifer1997limits, moskowitz2021asset}: betting limits, transaction costs, and the difficulty of consistently identifying the profitable side.

The view that sportsbooks profit by exploiting a predictably biased public has organized the literature for two decades. Legalized online betting is now expanding that market and shifting the composition of the betting population. It remains to be seen whether the prevalent view outlasts this change; the profit-bias identity provides the means to answer that question empirically.

%% file: sections/methods.tex
\section*{Methods}

\subsection*{Construction of schematic figure (Fig.~\ref{fig:Qprocess})}
Figure~\ref{fig:Qprocess} is an illustrative schematic; as such, its parameters are chosen
for clarity rather real-world plausibility. Outcomes follow a Gaussian model where the median outcome (home score - away score) is assumed to be 0 (center of the horizontal axis) and the horizontal axis represents the sportsbook's posted point spread, which may be viewed as their estimate of the outcome. We define $z(s)=2\Phi(s/2)-1,~z \in(-1,1)$, where $\Phi$ is the standard normal CDF, to capture the strength of the implied favorite.  The outcome distribution is then modeled as $F_m(s)=\tfrac12+0.24\,z(s)$, such that favorites (larger $|s|$) win the proposition less than half the time. The sportsbook's break-even threshold is set to a realistic vig, $\phi=100/120$ ($-120$
American odds), yielding $\tau=\phi/(1+\phi)\approx0.45$. The four public behaviours
differ only in the home bet share $b_h(s)$ and the outcome-covariance $\delta(s)$:
\textbf{(a)}~\emph{calibrated}, $b_h=\bar F_m$ (each side backed at its actual win
probability), so that $Q=2F_m(1-F_m)$; \textbf{(b)}~\emph{favorite-leaning},
$b_h=\tfrac12+0.32\,z$ (favorites backed beyond their actual win probability) with
$\delta=0$, such that $Q=\tfrac12+2(F_m-\tfrac12)(b_h-\tfrac12)$; and
\textbf{(c)}~\emph{square}-\textbf{(d)}~\emph{sharp}, which add a constant $\delta=\pm0.28$ such that $Q=\tfrac12+2(F_m-\tfrac12)(b_h-\tfrac12)+2F_m\bar F_m\,\delta$. We clip $\delta$ to the range
$\min\!\big(b_h/F_m,(1-b_h)/\bar F_m\big)$ to ensure that the conditional shares
$b_{h,L}=b_h+\bar F_m\,\delta$ and $b_{h,W}=b_h-F_m\,\delta$ remain in $[0,1]$. The shaded Goldilocks region shows the set of candidate spreads where the book
profits ($Q>\tau$) while still allowing a negligible bettor to find a profitable wager ($F_m<\tau$ or $F_m>1-\tau$).

\subsection*{Data collection}

We obtain public betting data from the DraftKings Network betting-splits website
\citep{dknetwork_splits}, which reports, for each upcoming event and market, the
percentage of handle and tickets wagered on each
side. The splits are collected each hour, yielding a series of samples from posting to game start: we record the event, market, selection, posted odds, and the handle and ticket shares. We focus on the two-sided market to which \eqref{eq:profit_bias_identity} most
directly applies: the baseball \emph{run line}. The run line is a variant of conventional point spread but with an important modification: whereas a
standard point spread varies the handicap and holds the payouts ($\phi_h$, $\phi_v$) roughly symmetric
(e.g. $\phi_h=\phi_v=0.91$), the run line fixes the handicap at $\pm1.5$ and
expresses the teams' relative strength through (generally asymmetric) payouts. As an example, a home favorite with a $-1.5$ handicap may pay $\phi_h=1.5$ with the corresponding visiting underdog ($+1.5$) paying $\phi_v=0.56$. The book thus moves the price (i.e., not the magnitude of the handicap) and it is the combination of the payouts and identity of the handicapped side that conveys the favored side. 



We obtain final scores from the public ESPN scoreboard API \citep{espn_scoreboard}. For each event, we
determine whether it won, lost, or pushed against the line \emph{in effect at that
snapshot}. A run-line wager wins when the sum of the selected team's score and its $\pm 1.5$ handicap
exceeds that of its opponent's. The dataset analysed here spans games from 2026-05-31 through 2026-08-28


\subsection*{Empirical testing of the bet share-outcome dependence (Fig.~\ref{fig:composition})}
To empirically probe the statistical dependence between public betting and realized outcomes, we analyzed MLB run-line bet shares measured at game onset. Specifically, we calculated the sample estimate $\widehat{\delta} = \widehat b_{h,L}$ and $\widehat b_{h,W}$ -- the difference in home bet share between matches where the home team won versus lost against the spread. The quantity $\widehat{\delta}$ was first computed over all games, and then on two individual strata -- the set of games with a home favorite, and those with a home underdog. The rationale for stratifying the data (i.e., conditioning on the identity of the favorite) was to control for hidden variables that produce spurious dependencies between bet share and realized outcome.

Figure~\ref{fig:composition}(a) plots the home bet share
against the home win probability, which we computed from the sportsbook's ``moneyline'' payouts (the moneyline is a bet on which team wins the contest, regardless of margin). The reason for this choice, which affects only the display item, was to allow for sorting the matches along the ``strength of favorite'' dimension. For home and visiting moneyline payouts of $\phi_h$ and $\phi_v$, respectively, we remove the book's vig (``devig'') by normalizing the corresponding implied win probabilities $1/(1+\phi_h)$ and $1/(1+\phi_v)$ to sum to one:
\begin{equation}
\label{eq:devig}
\mathrm{dv}(\phi_h,\phi_v) \;=\; \frac{1/(1+\phi_h)}{\,1/(1+\phi_h)+1/(1+\phi_v)\,}.
\end{equation}
We use the resulting devigged home-win probability $\mathrm{dv}(\phi_h,\phi_v)$ to order games by favorite strength. 
In the figure, games are grouped into deciles, markers are decile means, and the band denotes bootstrapped
$95\%$ CI. Panels (b) and (c) report $\widehat b_{h,L}$ (home lost the bet) and
$\widehat b_{h,W}$ (home won), pooled and within strata respectively,
along with $\widehat\delta=\widehat b_{h,L}-\widehat b_{h,W}$ and its two-sided
bootstrapped $p$-value.

\subsection*{Dynamics of book profit and bet share-outcome dependence (Fig.~\ref{fig:movement})}
To study the temporal dynamics of book profit and associated bet share-outcome dependence, we measured the losing-side share $\widehat{Q}$ and the home bet share $B_{g,t}$ across four, non-overlapping $6$-hour windows terminating at game onset. Only games with at least one sample in each window were included. Within each game, we averaged all samples from the same window. We then fit, separately for
home favorites and home underdogs, the following mixed model:
\begin{equation}
\label{eq:lmm_move}
B_{g,t}=\beta_0+\beta_h\,h_{g,t}+\beta_W\,W_g+\beta_{hW}\,(h_{g,t}\times W_g)+u_g+\varepsilon_{g,t},
\qquad u_g\sim\mathcal N(0,\sigma_u^2),
\end{equation}
where $B_{g,t}$ is the observed home bet share at sample $t$ of game $g$, $h_{g,t}$ is the number of hours before onset of game $g$,
$W_g\in\{0,1\}$ is a binary indicator of whether the home side won the
closing bet, and $u_g$ is a per-game random intercept. Since
$\delta(h)=b_{h,L}(h)-b_{h,W}(h)=-\beta_W-\beta_{hW}\,h$, statistical significance of the closing share is
assessed with the regression coefficient $\beta_W$ (at $h=0$). The \emph{movement} of the dependence is assessed by the
outcome$\times$time interaction $\beta_{hW}$. The reported p-values were obtained with a game-level cluster
bootstrap (resampling whole games with replacement and refitting the model). Figure~\ref{fig:movement} displays, for each stratum, the losing-side share $\widehat Q(h)$ and the conditional bet shares
$\widehat b_{h,L}(h)$ and $\widehat b_{h,W}(h)$. The shares are first averaged within each game (across samples in a $6$-hour bin) and then across games. Shading denotes across-game SEM.

\subsection*{Line--share feedback (Fig.~\ref{fig:lp})}
We investigate the direction of the line--share feedback: whether the book's price
drives the public's share, the reverse, or both. Since the run-line handicap is
fixed to $\pm1.5$, line movement is reflected in the payouts. We represent the momentary line by
$\ell_{g,t}=\mathrm{dv}(\phi_{h,g,t},\phi_{v,g,t})$, obtained by applying the devig transform \eqref{eq:devig}
to the \emph{run-line} payouts at sample $t$. In other words, we operationalize ``line'' to the devigged probability that the
home side covers $\pm1.5$. We then fit Jord\`a local
projections in both directions, for horizons $h=0,\dots,12$ hours:
\begin{align}
\label{eq:lp}
B_{g,t+h}-B_{g,t-1} &= \beta^{\ell\to B}_{h}\,\Delta\ell_{g,t}
  + \gamma_h\,\Delta B_{g,t-1} + a_g + \varepsilon_{g,t},\\
\ell_{g,t+h}-\ell_{g,t-1} &= \beta^{B\to\ell}_{h}\,\Delta B_{g,t}
  + \gamma'_h\,\Delta\ell_{g,t-1} + a'_g + \varepsilon'_{g,t},\nonumber
\end{align}
where $\Delta\ell_{g,t}=\ell_{g,t}-\ell_{g,t-1}$ and $\Delta B_{g,t}=B_{g,t}-B_{g,t-1}$
are the one-step line and share deviations, the outcome is measured from the
pre-move level $t-1$, the game fixed effect $a_g$ absorbs the latent state, and the
lagged one-step move of the responding series (i.e., $\Delta B_{g,t-1}$ and $\Delta\ell_{g,t-1}$, which we term the \emph{pre-move
momentum}) prevents the book from ``chasing'' the money already wagered. Standard errors are clustered by game. Under sequential ignorability (i.e., the current
move is as-good-as-random given the game state and the money so far), $\beta_h$ is
the impulse response of one series to a move in the other. We denoise the integer-rounded bet shares with a $3$-hour moving
average. 

\paragraph{The book's profit margin is indistinguishable from its hold. (Fig.~\ref{fig:accuracy}).}
For each game (indexed here by $i$), we reconstruct the book's profit margin per unit of that game's handle
from three observables: the closing home bet share $b_{h,i}$, the home and away payouts $\phi_{h,i},\phi_{v,i}$, and the realized outcome
$C_i\in\{0,1\}$ representing whether the home side covered:
\begin{equation}
\label{eq:pi_reconstructed}
\pi_i=C_i\big[(1-b_{h,i})-\phi_{h,i}b_{h,i}\big]+(1-C_i)\big[b_{h,i}-\phi_{v,i}(1-b_{h,i})\big].
\end{equation}
Here $\pi_i$ is a profit margin per unit of betting volume, which we have assumed to be equivalent for all games in the dataset.  Denoting the raw implied win probability with $a=1/(1+\phi)$, the corresponding \emph{devigged} win probability is calculated as $p^\ast_i=a_{h,i}/(a_{h,i}+a_{v,i})$. We then tabulate 
$D_i=2+\phi_{h,i}+\phi_{v,i}$, the per-game hold $h_i=(1-\phi_{h,i}\phi_{v,i})/D_i$, and the
public lean $\lambda_i=b_{h,i}-p^\ast_i$. Evaluating Prop.~\ref{prop:margin_anatomy_asym} at the
realized outcome (rather than the cover probability) yields the following expression for per-game profit:
\begin{equation}
\label{eq:pi_identity}
\pi_i=h_i+D_i\,\lambda_i\,(p^\ast_i-C_i).
\end{equation}
Confidence intervals for the mean across-game profit margin $\bar\pi = \frac{1}{N}\sum_i \pi_i$ are percentile intervals from a
nonparametric bootstrap over games ($4{,}000$ resamples).

To upper bound the level of price shading or outcome-covariance that is consistent with the observed data, we calculated the largest levels that yield profit margins within the observed confidence intervals. For price shading that is maximally aligned with the public lean,
$\theta = c\,\mathrm{sgn}(\lambda)$, where $c>0$ is the magnitude of price shading in probability units. Assuming $\delta=0$, the expected profit margin follows as $h + c\,\mathbb{E}[D|\lambda|]$. For a constant outcome covariance $\delta$ with
$\theta=0$ and $\bar F_m=p^\ast$, the profit margin is $h+\delta\,\mathbb{E}[D\,p^\ast(1-p^\ast)]$.
In both cases, the profit margins are linear in the level and have a slope fixed by the observed lean and price
distribution; the largest compatible value is thus given by where line crosses the upper limit of
the bootstrap interval. 

To assess overall calibration of book prices, we pool games and calculate the mean deviation between devigged price and realized cover rate, employing a binomial
standard error. To assess calibration across the price
range, we employed Gaussian kernel regression: on a grid of $200$ points spanning the observed range of
$p^\ast$, we estimate $\Pr(\text{cover}\mid p^\ast)$ as a Gaussian-weighted average of the
realized cover indicators $C_i$, each game weighted by its distance in price from that point
(Nadaraya--Watson regression, bandwidth $0.05$). We then form the miscalibration
curve as $\hat\theta(p^\ast)=p^\ast-\widehat{\Pr}(\text{cover}\mid p^\ast)$, which we summarize
with its largest absolute value $\sup|\hat\theta|$. 

To generate a null distribution for $\sup|\hat\theta|$, we hold the observed prices fixed and draw
counterfactual outcomes according to the book's implied probabilities: $C_i\sim\mathrm{Bernoulli}(p^\ast_i)$. We then recompute the supremum over $4,000$ draws -- the $p$-value is the fraction of simulated suprema that reach the observed one.

%% file: sections/supplement.tex
%

\clearpage
\appendix
\renewcommand{\theequation}{S\arabic{equation}}
\setcounter{equation}{0}
\renewcommand{\thetheorem}{S\arabic{theorem}}
\setcounter{theorem}{0}

\phantomsection
\label{sec:si}
\section*{Supplementary Information}

\subsection*{Notation}
Table~\ref{tab:symbols} defines all symbols used in the paper, where we indicate the section in which each variable first appears.

\begin{table}[htbp]
\centering
\caption{\textbf{Symbols used in the paper.} All bet shares are fractions of the total handle (i.e., normalized to 1).}
\label{tab:symbols}
\small
\begin{tabular}{@{}l p{0.58\textwidth} l@{}}
\toprule
\textbf{Symbol} & \textbf{Meaning} & \textbf{Defined} \\
\midrule
\multicolumn{3}{@{}l}{\emph{Outcome and line}}\\
$M$ & Realized outcome: margin of victory, total, or player statistic & Setup \\
$s$ & Sportsbook's posted line (e.g. point spread) & Setup \\
$F_m(s)$ & $\Pr(M<s)$, the outcome CDF at the line & \eqref{eq:Fm_methods} \\
$\bar F_m(s)$ & $\Pr(M>s)$; the home side wins the proposition & \eqref{eq:Fm_methods} \\
$\pi_{\mathrm{push}}(s)$ & $\Pr(M=s)$, the probability of a ``push' where all bets are cancelled & \eqref{eq:Fm_methods} \\
\addlinespace
\multicolumn{3}{@{}l}{\emph{Bet shares}}\\
$B_h(s),\,B_v(s)$ & Public's random bet share on the home / visitor side, $B_v=1-B_h$ & Setup \\
$b_h(s),\,b_v(s)$ & Unconditional expected bet shares, $b_h=\mathbb{E}[B_h]$ & \eqref{eq:delta_def} \\
$b_{h,L}(s)$ & $\mathbb{E}[B_h\mid M<s]$: home bet share when the home side loses & \eqref{eq:b_def_general} \\
$b_{v,L}(s)$ & $\mathbb{E}[B_v\mid M>s]$: visitor bet share when the visitor loses & \eqref{eq:b_def_general} \\
$b_{h,W}(s)$ & $\mathbb{E}[B_h\mid M>s]$: home bet share when the home side wins & \eqref{eq:delta_def} \\
$b_{\min},\,b_{\max}$ & $\min$ and $\max$ of $b_{h,L}$ and $b_{v,L}$ & Prop.~\ref{prop:goldilocks_brackets} \\
\addlinespace
\multicolumn{3}{@{}l}{\emph{Prices and the vig}}\\
$\phi$ & Net payout per unit wager on a winner (symmetric vig) & \eqref{eq:tau_def} \\
$\phi_h,\,\phi_v$ & Net payouts on the home / visitor side (asymmetric vig) & S1 \\
$\tau$ & $\phi/(1+\phi)$, the book's break-even losing-side share & \eqref{eq:tau_def} \\
$p^\ast$ & Devigged (overround-removed) price of the home side & Prop.~\ref{prop:margin_anatomy_asym} \\
$O$ & Overround: the two raw implied probabilities summed & Prop.~\ref{prop:margin_anatomy_asym} \\
$h$ & Hold: share of handle that a balanced book keeps, $h=1-1/O$ & Prop.~\ref{prop:margin_anatomy_asym} \\
$D$ & $2+\phi_h+\phi_v$ & Prop.~\ref{prop:margin_anatomy_asym} \\
\addlinespace
\multicolumn{3}{@{}l}{\emph{Profit and its drivers}}\\
$\pi(s)$ & Book profit per unit of handle & \eqref{eq:profit_bias_identity} \\
$Q(s)$ & Expected share of handle on the losing side & \eqref{eq:Q_def_methods} \\
$Q_{\mathrm{indep}}(s)$ & Value of $Q$ under Levitt independence & \eqref{eq:Q_indep} \\
$\delta(s)$ & $b_{h,L}-b_{h,W}$: outcome covariance of the bet share & \eqref{eq:delta_def} \\
$\theta(s)$ & Shading: deviation of the posted price from the truth & \eqref{eq:theta_lambda_def} \\
$\lambda(s)$ & Lean: deviation of the public's bet share from balance & \eqref{eq:theta_lambda_def} \\
$\mathcal{G}$ & The Goldilocks Zone & \eqref{eq:goldilocks_def} \\
\addlinespace
\multicolumn{3}{@{}l}{\emph{Public-belief model}}\\
$X$ & Latent expected outcome & \eqref{eq:belief_model_main} \\
$Y$ & Public's noisy, biased belief about $X$ & \eqref{eq:belief_model_main} \\
$\varepsilon$ & Systematic bias in the public's belief & \eqref{eq:belief_model_main} \\
$V,\,U$ & Belief noise and game randomness & \eqref{eq:belief_model_main} \\
\addlinespace
\multicolumn{3}{@{}l}{\emph{Estimation}}\\
$C_i$ & Realized outcome: $1$ if the home side won covered in game $i$ & \eqref{eq:pi_reconstructed} \\
$\pi_i$ & Book's profit margin for game $i$ & \eqref{eq:pi_reconstructed} \\
\bottomrule
\end{tabular}
\end{table}

\subsection*{S1. Asymmetric vigs, pushes, and the general profit--bias identity}

We allow for asymmetric vigs: if a unit stake is placed on the winning home side, the bettor receives a gross return of $1+\phi_h$, and if a unit stake is placed on the winning visitor side, the bettor receives a gross return of $1+\phi_v$, where $\phi_h,\phi_v>0$ represent the bettor's profit on a winning unit bet and satisfy $\phi_h\phi_v\le1$. The latter indicates that the book's overround $O\equiv(1+\phi_h)^{-1}+(1+\phi_v)^{-1}$ is at least one, and is exactly one only for a fair book. If the outcome is a push ($M=s$), all bets are returned. With total stake normalized so that $B_h(s)+B_v(s)=1$, the sportsbook profit for a single game is:
\begin{equation}
\label{eq:pi_def_asym}
\pi(M;s)=
\begin{cases}
B_v(s)-\phi_h\,B_h(s), & M>s,\\[2pt]
B_h(s)-\phi_v\,B_v(s), & M<s,\\[2pt]
0, & M=s.
\end{cases}
\end{equation}
Expected profit is computed by conditioning on the three outcomes $\{M<s\}$, $\{M>s\}$, $\{M=s\}$. If $M<s$ (home loses), then using \eqref{eq:pi_def_asym} and the definition of $b_{h,L}(s)$:
\[
\mathbb{E}[\pi(M;s)\mid M<s]
=
\mathbb{E}[B_h(s)-\phi_v(1-B_h(s))\mid M<s]
=
(1+\phi_v)\,b_{h,L}(s)-\phi_v.
\]
Similarly, if $M>s$, $\mathbb{E}[\pi(M;s)\mid M>s]=(1+\phi_h)\,b_{v,L}(s)-\phi_h$. If $M=s$, $\pi=0$. Taking expectations over the three cases yields:
\begin{equation}
\label{eq:profit_general}
\mathbb{E}[\pi(M;s)]
=
F_m(s)\,\big[(1+\phi_v)\,b_{h,L}(s)-\phi_v\big]
+
\bar F_m(s)\,\big[(1+\phi_h)\,b_{v,L}(s)-\phi_h\big].
\end{equation}
Define $Q_h(s)\equiv F_m(s)\,b_{h,L}(s)$, $Q_v(s)\equiv \bar F_m(s)\,b_{v,L}(s)$, $Q(s)\equiv Q_h(s)+Q_v(s)$, and $\Delta\phi\equiv \phi_v-\phi_h$. Substituting into \eqref{eq:profit_general} gives:
\[
\mathbb{E}[\pi(M;s)]
=
(1+\phi_h)\,Q(s)
+\Delta\phi\,Q_h(s)
-\phi_v\,F_m(s)-\phi_h\,\bar F_m(s).
\]
Using $F_m(s)+\pi_{\mathrm{push}}(s)+\bar F_m(s)=1$ and $\phi_v\,F_m(s)+\phi_h\,\bar F_m(s)=\phi_h(1-\pi_{\mathrm{push}}(s))+\Delta\phi\,F_m(s)$, we obtain the decomposition:
\begin{equation}
\label{eq:profit_decomp}
\mathbb{E}[\pi(M;s)]
=
(1+\phi_h)\,Q(s)
+\Delta\phi\,\big(Q_h(s)-F_m(s)\big)
-\phi_h\,\big(1-\pi_{\mathrm{push}}(s)\big).
\end{equation}
\emph{Special cases.} If $\phi_h=\phi_v=\phi$ then $\Delta\phi=0$ and \eqref{eq:profit_decomp} simplifies to:
\begin{equation}
\label{eq:profit_symmetric_push}
\mathbb{E}[\pi(M;s)]
=
(1+\phi)\,Q(s)
-\phi\,\big(1-\pi_{\mathrm{push}}(s)\big),
\end{equation}
which implies the uniform bound $\mathbb{E}[\pi(M;s)]\le(1+\phi)\,Q(s)$. Setting $\pi_{\mathrm{push}}(s)=0$ in \eqref{eq:profit_symmetric_push} recovers the profit--bias identity \eqref{eq:profit_bias_identity} of the main text.

\paragraph{Anatomy of the book's margin with asymmetric vigs.} The three-term decomposition of the main text (Prop.~\ref{prop:margin_anatomy}) can be generalized to asymmetric vigs by replacing the balanced-book reference $\tfrac12$ with the devigged price, and the symmetric hold $(1-\phi)/2$ by the overround-implied hold.

\begin{proposition}[Anatomy of the book's margin, asymmetric vigs]
\label{prop:margin_anatomy_asym}
Assume $\pi_{\mathrm{push}}(s)\equiv0$ and let $D\equiv2+\phi_h+\phi_v$. Define:
\[
p^\ast\equiv\frac{1+\phi_v}{D},
\qquad
h\equiv\frac{1-\phi_h\phi_v}{D},
\qquad
\theta(s)\equiv p^\ast-\bar F_m(s),
\qquad
\lambda(s)\equiv b_h(s)-p^\ast,
\]
where $p^\ast$ is the devigged implied win probability of the home side and $h$ is the book's hold. Then:
\begin{equation}
\label{eq:margin_anatomy_asym}
\mathbb{E}[\pi(s)]
= h
+ D\,\theta(s)\,\lambda(s)
+ D\,F_m(s)\bar F_m(s)\,\delta(s).
\end{equation}
Setting $\phi_h=\phi_v=\phi$ gives $p^\ast=\tfrac12$, $h=(1-\phi)/2$ and $D=2(1+\phi)$, recovering \eqref{eq:margin_anatomy_profit}.
\end{proposition}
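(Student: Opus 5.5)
The plan is a direct algebraic reduction of the general expected-profit formula \eqref{eq:profit_general} (with $\pi_{\mathrm{push}}(s)=0$) to the three claimed terms. I will reparametrize the two conditional losing-side shares by the unconditional home share $b_h$ and the dependence measure $\delta$. Abbreviate $F\equiv F_m(s)$ and $\bar F\equiv\bar F_m(s)=1-F$.

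First, from the convex-combination identity $b_h=F\,b_{h,L}+\bar F\,b_{h,W}$ together with $\delta=b_{h,L}-b_{h,W}$, solve the $2\times2$ linear system to get
\[
b_{h,L}=b_h+\bar F\,\delta, \qquad b_{h,W}=b_h-F\,\delta .
\]
Since $b_{v,L}=\mathbb{E}[1-B_h\mid M>s]=1-b_{h,W}$, this gives $b_{v,L}=1-b_h+F\,\delta$.

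Second, substitute these into
\[
\mathbb{E}[\pi]=F\big[(1+\phi_v)b_{h,L}-\phi_v\big]+\bar F\big[(1+\phi_h)b_{v,L}-\phi_h\big]
\]
and collect terms in $\delta$, in $b_h$, and the remainder.

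\begin{itemize}
\item The $\delta$ terms give $F\bar F\,\delta\,\big[(1+\phi_v)+(1+\phi_h)\big]=D\,F\bar F\,\delta$, which is the third term exactly.
\item The coefficient of $b_h$ is $F(1+\phi_v)-\bar F(1+\phi_h)=(1+\phi_v)-D\bar F$. Using $D p^\ast=1+\phi_v$, this equals $D(p^\ast-\bar F)=D\theta$.
\item The $b_h$-free remainder is $\bar F(1+\phi_h)-\phi_vF-\phi_h\bar F=\bar F(1+\phi_v)-\phi_v$.
\end{itemize}

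Third, match the result to $h+D\theta\lambda=h+D\theta\,b_h-D\theta\,p^\ast$. The $b_h$ coefficients already agree, so it remains to check
\[
\bar F(1+\phi_v)-\phi_v = h-Dp^\ast(p^\ast-\bar F).
\]
Because $Dp^\ast=1+\phi_v$, the $\bar F$ terms cancel on both sides. What is left is the identity $h=(1+\phi_v)^2/D-\phi_v$. Expanding the numerator gives $(1+\phi_v)^2-\phi_v(2+\phi_h+\phi_v)=1-\phi_h\phi_v$, which is the stated hold.

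Finally, for the symmetric case, set $\phi_h=\phi_v=\phi$. Then $D=2(1+\phi)$, $p^\ast=\tfrac12$ and $h=(1-\phi)/2$, which recovers \eqref{eq:margin_anatomy_profit}.

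The only step needing care is the constant-term matching. Whether it closes depends on choosing $p^\ast=(1+\phi_v)/D$, so the main risk is a sign or index slip between $\phi_h$ and $\phi_v$. This is the place to double-check against \eqref{eq:pi_def_asym}, where the home side losing pays out at $\phi_v$.
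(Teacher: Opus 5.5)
Your proposal is correct and follows the paper's proof essentially step for step. Both use the same reparametrization $b_{h,L}=b_h+\bar F\delta$, $b_{v,L}=1-b_h+F\delta$, collect the $\delta$ and $b_h$ coefficients into $D F\bar F$ and $D\theta$ in the same way, and close the constant term via the same identity $(1+\phi_v)^2-\phi_v D=1-\phi_h\phi_v$. The one addition in the paper is a short check that $(1-\phi_h\phi_v)/D=1-1/O$, which justifies calling $h$ the hold; the identity itself does not need this, since the statement defines $h$ by that formula.
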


\begin{proof}
Start from \eqref{eq:profit_general} with $\pi_{\mathrm{push}}(s)\equiv0$. By the law of total expectation, $b_h(s)=F_m(s)b_{h,L}(s)+\bar F_m(s)b_{h,W}(s)$; together with $\delta(s)=b_{h,L}(s)-b_{h,W}(s)$, this yields:
\[
b_{h,L}(s)=b_h(s)+\bar F_m(s)\,\delta(s),
\qquad
b_{h,W}(s)=b_h(s)-F_m(s)\,\delta(s),
\]
and since $B_v=1-B_h$, $b_{v,L}(s)=1-b_{h,W}(s)=\big(1-b_h(s)\big)+F_m(s)\,\delta(s)$. Substituting both into \eqref{eq:profit_general} and collecting the $\delta$ terms, whose coefficient is $F_m\bar F_m\big[(1+\phi_v)+(1+\phi_h)\big]=D\,F_m\bar F_m$, leads to:
\[
\mathbb{E}[\pi(s)]
= F_m\big[(1+\phi_v)b_h-\phi_v\big]
+ \bar F_m\big[(1+\phi_h)(1-b_h)-\phi_h\big]
+ D\,F_m\bar F_m\,\delta .
\]
Expanding the first two terms and factoring $b_h$ yields:
\[
\mathbb{E}[\pi(s)]
= \big[F_m(1+\phi_v)-\bar F_m(1+\phi_h)\big]\,b_h
\;+\;\big[\bar F_m-F_m\,\phi_v\big]
\;+\;D\,F_m\bar F_m\,\delta .
\]
We evaluate the two brackets in turn. Substituting $F_m=1-\bar F_m$ into the first and using $(1+\phi_v)+(1+\phi_h)=D$ and $(1+\phi_v)=Dp^\ast$:
\[
F_m(1+\phi_v)-\bar F_m(1+\phi_h)
=(1+\phi_v)-\bar F_m\big[(1+\phi_v)+(1+\phi_h)\big]
=D\,p^\ast-D\,\bar F_m
=D\,\theta .
\]
Performing the same substitution in the second bracket gives $\bar F_m-F_m\phi_v=\bar F_m(1+\phi_v)-\phi_v$. Writing $\bar F_m=p^\ast-\theta$, we have:
\[
\bar F_m(1+\phi_v)-\phi_v
= \big[p^\ast(1+\phi_v)-\phi_v\big]-\theta\,(1+\phi_v)
= \big[p^\ast(1+\phi_v)-\phi_v\big]-D\,\theta\,p^\ast .
\]
The bracketed constant evaluates to:
\[
p^\ast(1+\phi_v)-\phi_v
=\frac{(1+\phi_v)^2}{D}-\phi_v
=\frac{(1+\phi_v)^2-\phi_v D}{D}
=\frac{1-\phi_h\phi_v}{D},
\]
because $(1+\phi_v)^2-\phi_v(2+\phi_h+\phi_v)=1-\phi_h\phi_v$. Substituting both brackets back:
\[
\mathbb{E}[\pi(s)]
=D\,\theta(s)\,b_h(s)
+\frac{1-\phi_h\phi_v}{D}
-D\,\theta(s)\,p^\ast
+D\,F_m(s)\bar F_m(s)\,\delta(s),
\]
and collecting the two $\theta$ terms through $\lambda(s)=b_h(s)-p^\ast$:
\[
\mathbb{E}[\pi(s)]
=\frac{1-\phi_h\phi_v}{D}
+D\,\theta(s)\,\lambda(s)
+D\,F_m(s)\bar F_m(s)\,\delta(s).
\]

It remains to show that the constant $\frac{1-\phi_h\phi_v}{D}$ is equivalent to the book's hold $h$. Putting the two raw implied probabilities in the definition $O\equiv(1+\phi_h)^{-1}+(1+\phi_v)^{-1}$ over a common denominator results in the numerator $(1+\phi_v)+(1+\phi_h)=D$, such that:
\[
O=\frac{D}{(1+\phi_h)(1+\phi_v)},
\]
and hence:
\[
1-\frac{1}{O}
=\frac{D-(1+\phi_h)(1+\phi_v)}{D}
=\frac{1-\phi_h\phi_v}{D}
=h,
\]

where we have used $D-(1+\phi_h)(1+\phi_v)=(2+\phi_h+\phi_v)-(1+\phi_h+\phi_v+\phi_h\phi_v)=1-\phi_h\phi_v$. The constant term is therefore the share of handle that a balanced book keeps (i.e., $h$), giving \eqref{eq:margin_anatomy_asym}.
\end{proof}

\subsection*{S2. Recovering Levitt (2004) under independence}

The profit-bias identity \eqref{eq:profit_bias_identity} is a strict generalization of the bookmaker-profit expression in \citet{levitt2004gambling}. In Levitt's notation, $p$ is the probability that the favorite wins, $f$ is the fraction of total dollars wagered on the favorite, and $v$ is the vig; his expected gross profit per unit bet is:
\begin{equation}
\label{eq:levitt_profit}
\mathbb{E}[\pi_{\mathrm{Levitt}}]
=
\big[(1-p)\,f + p\,(1-f)\big](1+v) - \big[(1-p)(1-f) + p\,f\big].
\end{equation}
The first bracketed term is the expected fraction wagered on the eventual \emph{losing} side, from which the book collects $1+v$ per unit; the second bracketed term is the expected fraction on the eventual \emph{winning} side, on which it pays out $1$ per unit. Levitt's expression is therefore of the form $\mathrm{losing~share} \times(1+v) - \mathrm{winning~share}$, which follows as:
\[
\mathbb{E}[\pi_{\mathrm{Levitt}}]
=\big[(1-p)f+p(1-f)\big](1+v)-\Big(1-\big[(1-p)f+p(1-f)\big]\Big)
=(2+v)\,\big(f+p-2pf\big)-1 ,
\]
which is his Eq.~(2). His profit is thus already an affine, increasing function of a single losing-side share --- the same structural form as \eqref{eq:profit_bias_identity}.

An important distinction between Levitt's and our expressions pertains to the composition of the losing-side share. In \eqref{eq:levitt_profit}, $f$ is a single \emph{unconditional} number, such that forming $(1-p)f$ implicitly treats the favorite's bet share as statistically independent of the realized outcome. Identifying the favorite with the home side gives $p=\bar F_m(s)$ and $1-p=F_m(s)$, and imposing that independence gives:
\[
b_{h,L}(s)=b_h(s)=f,\qquad b_{v,L}(s)=b_v(s)=1-f ,
\]
such that our outcome-conditioned share \eqref{eq:Q_def_methods} equates to Levitt's bracket:
\[
Q(s)=F_m(s)\,f+\bar F_m(s)\,(1-f)=(1-p)f+p(1-f)=f+p-2pf\;\equiv\;Q_{\mathrm{indep}} .
\]
This shows alignment of bet shares, but not yet of profits. The two profit expressions are denominated in different units and use different vig conventions -- we reconcile the different conventions next, and then verify that the identity reproduces Levitt's Eq.~(2) exactly. We note also that Levitt's later step of writing $f=f(p)$ with $\partial f/\partial p>0$ does not affect his independence assumption: it describes how aggregate betting responds to the \emph{posted} line, not how bet shares covary with the \emph{realized} outcome.

\paragraph{Reconciling the two conventions.}
To allow a direct comparison between \eqref{eq:profit_bias_identity} and \eqref{eq:levitt_profit}, one must first reconcile two bookkeeping conventions relating to (i) the vig and (ii) the unit of measurement of profit.   


\emph{(i) The vig.} Levitt's $v$ is defined relative to the \emph{win} amount: a losing bettor pays
$1+v$ and a winning bettor collects $1$ (at the standard $-110$ price, $110$ is risked and $100$ is earned, so $v=1/10$). Our $\phi$ is the net payout per unit \emph{wagered}: a bettor risks $1$ to earn $\phi$. The relationship between Levitt's $v$ and our $\phi$ is thus given by:
\begin{equation}
\label{eq:vig_dictionary}
\phi=\frac{1}{1+v}.
\end{equation}
For example, $-110$ in American odds corresponds to $\phi=10/11$ and $v=1/10$. 

\emph{(ii) Normalization of profit.} Our $\pi(s)$ is the profit per unit of \emph{handle}: total stake is
normalized to $B_h(s)+B_v(s)=1$. Levitt's ``gross profit per unit bet'' is denominated in the
win amount, such that a single unit bet in his accounting puts $1+v$ of handle at risk. One Levitt unit
therefore carries $1+v=1/\phi$ units of handle, and the two profit measures are related by:
\begin{equation}
\label{eq:numeraire}
\mathbb{E}[\pi_{\mathrm{Levitt}}]=(1+v)\,\mathbb{E}[\pi(s)]=\frac{\mathbb{E}[\pi(s)]}{\phi}.
\end{equation}

\paragraph{Exact recovery of Levitt's Eq.~(2).}
Applying \eqref{eq:vig_dictionary} and \eqref{eq:numeraire} to the identity
\eqref{eq:profit_bias_identity} yields:
\begin{equation}
\label{eq:identity_levitt_units}
\frac{\mathbb{E}[\pi(s)]}{\phi}
=\frac{(1+\phi)\,Q(s)-\phi}{\phi}
=\Big(1+\tfrac{1}{\phi}\Big)Q(s)-1
=(2+v)\,Q(s)-1 .
\end{equation}
If we assume Levitt's independence and designate the favorite as the home side, \eqref{eq:identity_levitt_units} follows as:  
\[
\mathbb{E}[\pi_{\mathrm{Levitt}}]=(2+v)\,(f+p-2pf)-1,
\]
which is Levitt's Eq.~(2) exactly. 


As will be derived in the next section, the difference between Levitt's $f+p-2pf$ and our $Q(s)$ is precisely the covariance between bet shares and realized outcomes. Levitt's expression thus understates (or overstates) the book's margin whenever the public's money tends toward the eventual loser (winner).

\subsection*{S3. Derivation of the independence test}

Here we derive the statistical test of bet share--outcome independence \eqref{eq:delta_def}, which implies equality between conditional and unconditional bet shares:
\begin{equation}
    \label{eqn:eq_cond_uncond}
b_{h,L}(s)=b_h(s),
\qquad\qquad
b_{v,L}(s)=b_v(s).
\end{equation}
We show below that these two conditions collapse into a single equality, and thus bet share--outcome independence may be assayed with a single test. 


Consider the home condition first. Excluding pushes, the law of total expectation gives:
\begin{equation}
\label{eq:bh_total}
b_h(s)=F_m(s)\,b_{h,L}(s)+\bar F_m(s)\,b_{h,W}(s).
\end{equation}
The unconditional share is therefore a weighted average of two conditional shares, with the weights given by the probabilities of the two outcomes. Enforcing $b_{h,L}(s)=b_h(s)$ leads to:
\[
\bar F_m(s)\big[b_{h,L}(s)-b_{h,W}(s)\big]=0 .
\]
Assuming that $\bar F_m(s)>0$, this implies:
\[
b_{h,L}(s)=b_{h,W}(s).
\]

Note that the visitor loses the proposition precisely when the home side wins, such that:
\[
b_{v,L}(s)=1-b_{h,W}(s).
\]
Combined with $b_v(s)=1-b_h(s)$, the requirement $b_{v,L}(s)=b_v(s)$ is identical to $b_{h,W}(s)=b_h(s)$. In turn, $b_{h,W}(s)=b_h(s)$ implies:
\[
F_m(s)\big[b_{h,L}(s)-b_{h,W}(s)\big]=0 .
\]
Both conditions in \eqref{eqn:eq_cond_uncond} thus reduce to:
\[
\delta(s)=b_{h,L}(s)-b_{h,W}(s)=0.
\]
which is \eqref{eq:delta_def}.

\paragraph{Excess profit from bet-share outcome dependence}
Here we derive an expression for the change in the book profit that is brought about by a non-zero $\delta$. 
Substituting (\ref{eqn:eq_cond_uncond}) into \eqref{eq:Q_def_methods} yields the profit driver under Levitt's independence:
\begin{equation}
\label{eq:Q_indep}
Q_{\mathrm{indep}}(s)\equiv F_m(s)\,b_h(s)+\bar F_m(s)\,b_v(s).
\end{equation}
To relate $Q_{\mathrm{indep}}(s)$ to the realized losing-side share $Q(s)$, it is helpful to write both in terms of home bet shares:
\[
Q(s)=F_m(s)\,b_{h,L}(s)+\bar F_m(s)\big(1-b_{h,W}(s)\big),
\qquad
Q_{\mathrm{indep}}(s)=F_m(s)\,b_h(s)+\bar F_m(s)\big(1-b_h(s)\big).
\]
Subtracting $Q_{\mathrm{indep}}$ from $Q(s)$, one obtains:
\begin{equation}
    \label{eq:bh_Q_Qindep}
    Q(s)-Q_{\mathrm{indep}}(s)
    =F_m(s)\big[b_{h,L}(s)-b_h(s)\big]
    +\bar F_m(s)\big[b_h(s)-b_{h,W}(s)\big].
\end{equation}

Employing \eqref{eq:bh_total}, the two bracketed expressions above may be written as:
\begin{equation}
    \label{eq:bh_brackets}
    b_{h,L}(s)-b_h(s)=\bar F_m(s)\,\delta(s),
    \qquad\qquad
    b_h(s)-b_{h,W}(s)=F_m(s)\,\delta(s).  
\end{equation}
Substituting \eqref{eq:bh_brackets} back into \eqref{eq:bh_Q_Qindep} leads to:
\[
Q(s)-Q_{\mathrm{indep}}(s)
=2\,F_m(s)\bar F_m(s)\,\delta(s).
\]
To connect the excess profit to the covariance between bet shares and realized outcomes, let $I=\mathbf 1\{M<s\}$ represent the indicator that
the home side fails to cover, such that $\mathbb{E}[I]=F_m(s)$. Note that one can write $\mathbb{E}[B_h I]=F_m(s)\,b_{h,L}(s)$. The covariance between $B_h$ and $I$ is then given by:
\[
\mathrm{Cov}\big(B_h(s),I\big)
=\mathbb{E}[B_h I]-\mathbb{E}[B_h]\,\mathbb{E}[I]
=F_m(s)\big[b_{h,L}(s)-b_h(s)\big]
=F_m(s)\bar F_m(s)\,\delta(s),
\]
where in the last step we have used \eqref{eq:bh_brackets}. Putting all of this together, the excess profit can be written in two equivalent forms:
\begin{equation}
\label{eq:Q_excess}
Q(s)-Q_{\mathrm{indep}}(s)=2\,F_m(s)\,\bar F_m(s)\,\delta(s)
=2\,\mathrm{Cov}\!\big(B_h(s),\,\mathbf{1}\{M<s\}\big).
\end{equation}
The significance of \eqref{eq:Q_excess} is that the deviation of our profit driver $Q$ from Levitt's $Q_{\mathrm{indep}}$ is twice the covariance between bet shares and realized outcomes. 

\subsection*{S4. Public-belief model: \texorpdfstring{$Q$}{Q} as a misclassification probability}

Here we provide the proof of Proposition~\ref{prop:misclassification} in the no-push case where $\Pr(M=s)=0$. We model the public's belief about the latent expected margin $X$ with:
\begin{equation}
\label{eq:belief_model}
Y = X + \varepsilon + V,
\qquad
M = X + U,
\end{equation}
where $\varepsilon$ is a systematic bias and $U$ (game randomness) is independent of $V$ (belief noise). Let $F_U$ denote the CDF of $U$ and let $F_V$ denote the CDF of $V$.

Let the public bet the home side whenever $Y>s$. Define the Bernoulli indicator:
\[
B_h(s) \equiv \mathbf{1}\{Y>s\},
\qquad
B_v(s)\equiv 1-B_h(s)=\mathbf{1}\{Y\le s\}.
\]
This constrains the public's bet share to the two extrema $\{0,1\}$, such that the public acts on the signal $Y$ and stakes one side exclusively. This ``specialization'' converts the losing-side shares into misclassification probabilities but is not restrictive for $Q(s)$: with a continuum of bettors holding independent signals $Y_i=X+\varepsilon+V_i$, the home bet share $B_h(s)=\Pr(Y_i>s\mid X)=1-F_V(s-X-\varepsilon)\in(0,1)$ yields the same $Q(s)$ derived below, since integrating the fraction over $X$ reproduces \eqref{eq:Q_expectation} term by term.


\medskip
Recall the definition of $b_{h,L}(s)$ and $b_{v,L}(s)$ as conditional expected bet shares on the losing side:
\begin{equation}
\label{eq:mu_def}
b_{h,L}(s) \equiv \mathbb{E}[B_h(s)\mid M<s],
\qquad
b_{v,L}(s) \equiv \mathbb{E}[B_v(s)\mid M>s].
\end{equation}
Because $B_h(s)$ and $B_v(s)$ are Bernoulli, their conditional expectations may be written as conditional probabilities:
\begin{equation}
\label{eq:mu_as_probs}
b_{h,L}(s)=\Pr(Y>s\mid M<s),
\qquad
b_{v,L}(s)=\Pr(Y\le s\mid M>s).
\end{equation}

Substituting \eqref{eq:mu_as_probs} into the profit-bias identity \eqref{eq:Q_def_methods} yields:
\begin{equation}
\label{eq:Q_joint}
Q(s) = \Pr(Y>s,\; M<s) + \Pr(Y\le s,\; M>s).
\end{equation}
\begin{proof}[Proof of Proposition~\ref{prop:misclassification}]
The claim follows immediately from \eqref{eq:Q_joint}: the two events $\{Y>s,\,M<s\}$ (public backs home, home loses) and $\{Y\le s,\,M>s\}$ (public backs visitor, visitor loses) are exactly the two ways that the public's signal $Y$ misorders the realized margin of victory $M$ relative to the line $s$. The union of these disjoint events is the event that the public backs the losing side, such that $Q(s)=\Pr(\text{public backs the losing side})$, which is the probability of misclassifying the target $M>s$ using the predictor $\mathbf{1}\{Y>s\}$.
\end{proof}

\paragraph{Evaluating $Q(s)$ under the public-belief model.}
From \eqref{eq:belief_model}, the dependence between $Y$ and $M$ stems from the shared latent variable $X$. Even if $U\perp V$, the events $\{Y>s\}$ and $\{M<s\}$ are not generally independent; however, they are independent conditional on $X$ because $Y$ depends on $(X,V)$ and $M$ depends on $(X,U)$ with $U\perp V$.
We can therefore write:
\[
\Pr(Y>s,\;M<s)
=
\mathbb{E}_X\!\left[\Pr(Y>s\mid X)\Pr(M<s\mid X)\right],
\]
and similarly for $\Pr(Y\le s,\;M>s)$. Conditioning on $X$ gives:
\[
\Pr(Y>s\mid X)=\Pr(V>s-X-\varepsilon)=1-F_V(s-X-\varepsilon),
\]
and:
\[
\Pr(M<s\mid X)=\Pr(U<s-X)=F_U(s-X).
\]
Therefore, combining with \eqref{eq:Q_joint}:
\begin{align}
\label{eq:Q_expectation}
Q(s)
&=
\mathbb{E}_X\!\left[(1-F_V(s-X-\varepsilon))\,F_U(s-X)\right]
+
\mathbb{E}_X\!\left[F_V(s-X-\varepsilon)\,(1-F_U(s-X))\right].
\end{align}
Equation \eqref{eq:Q_expectation} is the expression for $Q(s)$ under the public-belief model.

To connect this to the losing-side bet shares, we write \eqref{eq:mu_as_probs} as:
\begin{equation}
    \label{eq:mu_as_probs_2}
    b_{h,L}(s)=\Pr(Y>s\mid M<s)=\frac{\Pr(Y>s, M<s)}{\Pr(M<s)}=\frac{\mathbb{E}_X\!\left[(1-F_V(s-X-\varepsilon))\,F_U(s-X)\right]}{F_m(s)},
\end{equation}
and
\begin{equation}
    \label{eq:mu_as_probs_3}
    b_{v,L}(s)=\Pr(Y\le s\mid M>s) = \frac{\Pr(Y\le s, M>s)}{\Pr(M>s)} = \frac{\mathbb{E}_X\!\left[F_V(s-X-\varepsilon)\,(1-F_U(s-X))\right]}{1-F_m(s)}.
\end{equation}
Substituting \eqref{eq:mu_as_probs_2} and \eqref{eq:mu_as_probs_3} into $Q(s)=b_{h,L}(s) F_m(s) + b_{v,L}(s)(1-F_m(s))$, we recover \eqref{eq:Q_expectation} identically.

\subsection*{S5. The Goldilocks book-profit region}

We define $\tau\equiv\phi/(1+\phi)$. The book is profitable if \eqref{eq:book_profit_condition} is $Q(s)>\tau$, and $Q(s)$ is affine in $F_m(s)$:
\begin{equation}
Q(s)=F_m(s)\,b_{h,L}(s)+\bar F_m(s)\,b_{v,L}(s)
=b_{v,L}(s)+F_m(s)\big(b_{h,L}(s)-b_{v,L}(s)\big),
\label{eq:Q_affine_in_F}
\end{equation}
such that \eqref{eq:book_profit_condition} is the linear inequality:
\begin{equation}
F_m(s)\big(b_{h,L}(s)-b_{v,L}(s)\big)
> \tau-b_{v,L}(s).
\label{eq:book_profit_condition_caseII}
\end{equation}
Below we prove Proposition~\ref{prop:goldilocks_brackets}.

\begin{proof}[Proof of Proposition~\ref{prop:goldilocks_brackets}]
Write $a=b_{h,L}(s)$, $b=b_{v,L}(s)$, $F=F_m(s)\in[0,1]$, so that $Q=(1-F)b+Fa$ is a convex combination of $a$ and $b$; consequently, $\min(a,b)\le Q\le\max(a,b)$, which is \eqref{eq:Q_sandwich}.

\emph{(i) Always.} If $b_{\min}(s)=\min(a,b)>\tau$, then $Q\ge\min(a,b)>\tau$ for every $F\in[0,1]$.

\emph{(ii) Never.} If $b_{\max}(s)=\max(a,b)\le\tau$, then $Q\le\max(a,b)\le\tau$ for every $F\in[0,1]$, so \eqref{eq:book_profit_condition} fails.

\emph{(iii) Threshold.} Suppose $\min(a,b)\le\tau<\max(a,b)$, such that $a\ne b$ and $F^\ast\equiv(\tau-b)/(a-b)$ is well defined. If $a>b$, then $b=\min(a,b)\le\tau<a=\max(a,b)$, so that $0\le\tau-b<a-b$; dividing by $a-b>0$ gives $F^\ast\in[0,1)$ and \eqref{eq:book_profit_condition_caseII} becomes the condition $F>F^\ast$. If instead $a<b$, then $a\le\tau<b$, so that $a-b\le\tau-b<0$; dividing by $a-b<0$ gives $F^\ast\in(0,1]$, and the direction of the inequality \eqref{eq:book_profit_condition_caseII} is reversed to yield $F<F^\ast$. In both cases, $F^\ast\in[0,1]$ and \eqref{eq:book_profit_condition_caseII} places $F_m(s)$ on the side of $F^\ast$ that assigns greater weight to $b_{\max}(s)$. 

The sufficiency of $b_{\min}(s)>\tau$ is case (i); the necessity of $b_{\max}(s)>\tau$ follows from case (ii).
\end{proof}

\noindent When the shares coincide ($a=b=b_L$), \eqref{eq:Q_affine_in_F} reduces to $Q(s)=b_L(s)$ and the book profit is positive if $b_L(s)>\tau$. Intersecting the book-profit region with the bettor set $\mathcal{B}$ gives the Goldilocks Zone $\mathcal{G}=\{s:Q(s)>\tau\}\cap\mathcal{B}$, with $Q(s)$ given by \eqref{eq:Q_affine_in_F}. If $F_m$ is strictly increasing, bettor profitability \eqref{eq:bettor_set} may be written in terms of quantiles:
\begin{equation}
\mathcal{B}
= \Big(-\infty,\;F_m^{-1}(\tau)\Big) \;\cup\; \Big(F_m^{-1}(1-\tau),\;\infty\Big).
\label{eq:bettor_set_quantiles}
\end{equation}
The sufficient and necessary conditions of the main text follow by intersecting cases (i) and (ii) with membership of $F_m(s)$ in one of these tails; only the Threshold regime (iii) requires the joint values of $F_m$, $b_{h,L}$, and $b_{v,L}$ at the same $s$, and is therefore evaluated empirically.

%% file: template.bbl
\begin{thebibliography}{41}
\providecommand{\natexlab}[1]{#1}
\providecommand{\url}[1]{\texttt{#1}}
\expandafter\ifx\csname urlstyle\endcsname\relax
  \providecommand{\doi}[1]{doi: #1}\else
  \providecommand{\doi}{doi: \begingroup \urlstyle{rm}\Url}\fi

\bibitem[Avery and Chevalier(1999)]{avery1999sentiment}
Christopher Avery and Judith Chevalier.
\newblock Identifying investor sentiment from price paths: The case of football
  betting.
\newblock \emph{The Journal of Business}, 72\penalty0 (4):\penalty0 493--521,
  1999.
\newblock \doi{10.1086/209625}.

\bibitem[Croxson and Reade(2014)]{croxson2014information}
Karen Croxson and J.~James Reade.
\newblock Information and efficiency: Goal arrival in soccer betting.
\newblock \emph{The Economic Journal}, 124\penalty0 (575):\penalty0 62--91,
  2014.
\newblock \doi{10.1111/ecoj.12033}.

\bibitem[Dmochowski(2023)]{dmochowski2023statistical}
Jacek~P Dmochowski.
\newblock A statistical theory of optimal decision-making in sports betting.
\newblock \emph{Plos one}, 18\penalty0 (6):\penalty0 e0287601, 2023.

\bibitem[{DraftKings Network}(2026)]{dknetwork_splits}
{DraftKings Network}.
\newblock Betting splits.
\newblock
  \url{https://dknetwork.draftkings.com/draftkings-sportsbook-betting-splits/},
  2026.
\newblock Public handle and ticket percentages by market; accessed 2026.

\bibitem[{ESPN}(2026)]{espn_scoreboard}
{ESPN}.
\newblock Scoreboard api.
\newblock \url{https://site.api.espn.com/apis/site/v2/sports/}, 2026.
\newblock Final scores and game status; accessed 2026.

\bibitem[Franck et~al.(2011)Franck, Verbeek, and
  N\"{u}esch]{franck2011sentimental}
Egon~P. Franck, Erwin Verbeek, and Stephan N\"{u}esch.
\newblock Sentimental preferences and the organizational regime of betting
  markets.
\newblock \emph{Southern Economic Journal}, 78\penalty0 (2):\penalty0 502--518,
  2011.
\newblock \doi{10.4284/0038-4038-78.2.502}.

\bibitem[Gandar et~al.(1998)Gandar, Dare, Brown, and Zuber]{gandar1998informed}
John~M. Gandar, William~H. Dare, Craig~R. Brown, and Richard~A. Zuber.
\newblock Informed traders and price variations in the betting market for
  professional basketball games.
\newblock \emph{The Journal of Finance}, 53\penalty0 (1):\penalty0 385--401,
  1998.
\newblock \doi{10.1111/0022-1082.155346}.

\bibitem[Gandar et~al.(2002)Gandar, Zuber, Johnson, and
  Dare]{gandar2002reexamining}
John~M. Gandar, Richard~A. Zuber, R.~Stafford Johnson, and William Dare.
\newblock Re-examining the betting market on major league baseball games: Is
  there a reverse favourite-longshot bias?
\newblock \emph{Applied Economics}, 34\penalty0 (10):\penalty0 1309--1317,
  2002.
\newblock \doi{10.1080/00036840110095427}.

\bibitem[Gray and Gray(1997)]{gray1997testing}
Philip~K. Gray and Stephen~F. Gray.
\newblock Testing market efficiency: Evidence from the {NFL} sports betting
  market.
\newblock \emph{The Journal of Finance}, 52\penalty0 (4):\penalty0 1725--1737,
  1997.
\newblock \doi{10.1111/j.1540-6261.1997.tb01129.x}.

\bibitem[Hub\'{a}\v{c}ek et~al.(2019)Hub\'{a}\v{c}ek, \v{S}ourek, and
  \v{Z}elezn\'{y}]{hubacek2019exploiting}
Ond\v{r}ej Hub\'{a}\v{c}ek, Gustav \v{S}ourek, and Filip \v{Z}elezn\'{y}.
\newblock Exploiting sports-betting market using machine learning.
\newblock \emph{International Journal of Forecasting}, 35\penalty0
  (2):\penalty0 783--796, 2019.
\newblock \doi{10.1016/j.ijforecast.2019.01.001}.

\bibitem[Humphreys(2010)]{humphreys2010pointspread}
Brad~R. Humphreys.
\newblock Point spread shading and behavioral biases in {NBA} betting markets.
\newblock \emph{Rivista di Diritto ed Economia dello Sport}, 6\penalty0
  (1):\penalty0 13--26, 2010.

\bibitem[Humphreys(2011)]{humphreys2011financial}
Brad~R. Humphreys.
\newblock The financial consequences of unbalanced betting on {NFL} games.
\newblock \emph{International Journal of Sport Finance}, 6\penalty0
  (1):\penalty0 60--71, 2011.

\bibitem[Humphreys et~al.(2013)Humphreys, Paul, and
  Weinbach]{humphreys2013homeunderdog}
Brad~R. Humphreys, Rodney~J. Paul, and Andrew~P. Weinbach.
\newblock Bettor biases and the ``home-underdog'' bias in the {NFL}.
\newblock \emph{International Journal of Sport Finance}, 8\penalty0
  (4):\penalty0 294--311, 2013.

\bibitem[Jord{\`a}(2005)]{jorda2005estimation}
{\`O}scar Jord{\`a}.
\newblock Estimation and inference of impulse responses by local projections.
\newblock \emph{American Economic Review}, 95\penalty0 (1):\penalty0 161--182,
  2005.

\bibitem[Jullien and Salani{\'e}(2000)]{jullien2000estimating}
Bruno Jullien and Bernard Salani{\'e}.
\newblock Estimating preferences under risk: The case of racetrack bettors.
\newblock \emph{Journal of Political Economy}, 108\penalty0 (3):\penalty0
  503--530, 2000.
\newblock \doi{10.1086/262127}.

\bibitem[Krieger and Fodor(2013)]{krieger2013price}
Kevin Krieger and Andy Fodor.
\newblock Price movements and the prevalence of informed traders: The case of
  line movement in college basketball.
\newblock \emph{Journal of Economics and Business}, 68:\penalty0 70--82, 2013.
\newblock \doi{10.1016/j.jeconbus.2013.04.001}.

\bibitem[Kuypers(2000)]{kuypers2000information}
Tim Kuypers.
\newblock Information and efficiency: An empirical study of a fixed odds
  betting market.
\newblock \emph{Applied Economics}, 32\penalty0 (11):\penalty0 1353--1363,
  2000.
\newblock \doi{10.1080/00036840050151449}.

\bibitem[Levitt(2004)]{levitt2004gambling}
Steven~D. Levitt.
\newblock Why are gambling markets organised so differently from financial
  markets?
\newblock \emph{The Economic Journal}, 114\penalty0 (495):\penalty0 223--246,
  2004.
\newblock \doi{10.1111/j.1468-0297.2004.00207.x}.
\newblock URL
  \url{https://pricetheory.uchicago.edu/levitt/Papers/LevittWhyAreGamblingMarkets2004.pdf}.

\bibitem[Moskowitz(2021)]{moskowitz2021asset}
Tobias~J. Moskowitz.
\newblock Asset pricing and sports betting.
\newblock \emph{The Journal of Finance}, 76\penalty0 (6):\penalty0 3153--3209,
  2021.
\newblock \doi{10.1111/jofi.13082}.

\bibitem[Ottaviani and S{\o}rensen(2008)]{ottaviani2008favorite}
Marco Ottaviani and Peter~Norman S{\o}rensen.
\newblock The favorite-longshot bias: An overview of the main explanations.
\newblock In Donald~B. Hausch and William~T. Ziemba, editors, \emph{Handbook of
  Sports and Lottery Markets}, pages 83--101. North-Holland, 2008.

\bibitem[Paul and Weinbach(2002)]{paul2002totals}
Rodney~J. Paul and Andrew~P. Weinbach.
\newblock Market efficiency and a profitable betting rule: Evidence from totals
  on professional football.
\newblock \emph{Journal of Sports Economics}, 3\penalty0 (3):\penalty0
  256--263, 2002.
\newblock \doi{10.1177/1527002502003003003}.

\bibitem[Paul and Weinbach(2007)]{paul2007sportsbook}
Rodney~J. Paul and Andrew~P. Weinbach.
\newblock Does sportsbook.com set pointspreads to maximize profits? tests of
  the levitt model of sportsbook behavior.
\newblock \emph{Journal of Prediction Markets}, 1\penalty0 (3):\penalty0
  209--218, 2007.
\newblock URL
  \url{https://www.ubplj.org/index.php/jpm/article/download/429/461}.

\bibitem[Paul and Weinbach(2008)]{paul2008nba}
Rodney~J. Paul and Andrew~P. Weinbach.
\newblock Price setting in the nba gambling market: Tests of the levitt model
  of sportsbook behavior.
\newblock \emph{International Journal of Sport Finance}, 3\penalty0
  (3):\penalty0 137--145, 2008.
\newblock URL
  \url{https://fitpublishing.com/content/price-setting-nba-gambling-market-tests-levitt-model-sportsbook-behavior-pp-137-145}.

\bibitem[Paul and Weinbach(2009)]{paul2009ncaa}
Rodney~J. Paul and Andrew~P. Weinbach.
\newblock Sportsbook behavior in the {NCAA} football betting market: Tests of
  the traditional and {Levitt} models of sportsbook behavior.
\newblock \emph{The Journal of Prediction Markets}, 3\penalty0 (2):\penalty0
  21--37, 2009.

\bibitem[Paul and Weinbach(2011)]{paul2011nfl}
Rodney~J. Paul and Andrew~P. Weinbach.
\newblock {NFL} bettor biases and price setting: Further tests of the {Levitt}
  hypothesis of sportsbook behaviour.
\newblock \emph{Applied Economics Letters}, 18\penalty0 (2):\penalty0 193--197,
  2011.
\newblock \doi{10.1080/13504850903508242}.

\bibitem[Paul and Weinbach(2012)]{paul2012nhl}
Rodney~J. Paul and Andrew~P. Weinbach.
\newblock Sportsbook pricing and the behavioral biases of bettors in the {NHL}.
\newblock \emph{Journal of Economics and Finance}, 36\penalty0 (1):\penalty0
  123--135, 2012.
\newblock \doi{10.1007/s12197-009-9112-4}.

\bibitem[Sauer(1998)]{sauer1998economics}
Raymond~D. Sauer.
\newblock The economics of wagering markets.
\newblock \emph{Journal of Economic Literature}, 36\penalty0 (4):\penalty0
  2021--2064, 1998.

\bibitem[Shin(1991)]{shin1991optimal}
Hyun~Song Shin.
\newblock Optimal betting odds against insider traders.
\newblock \emph{The Economic Journal}, 101\penalty0 (408):\penalty0 1179--1185,
  1991.
\newblock \doi{10.2307/2234434}.

\bibitem[Shin(1992)]{shin1992prices}
Hyun~Song Shin.
\newblock Prices of state contingent claims with insider traders, and the
  favourite-longshot bias.
\newblock \emph{The Economic Journal}, 102\penalty0 (411):\penalty0 426--435,
  1992.
\newblock \doi{10.2307/2234526}.

\bibitem[Shin(1993)]{shin1993measuring}
Hyun~Song Shin.
\newblock Measuring the incidence of insider trading in a market for
  state-contingent claims.
\newblock \emph{The Economic Journal}, 103\penalty0 (420):\penalty0 1141--1153,
  1993.
\newblock \doi{10.2307/2234240}.

\bibitem[Shleifer and Vishny(1997)]{shleifer1997limits}
Andrei Shleifer and Robert~W. Vishny.
\newblock The limits of arbitrage.
\newblock \emph{The Journal of Finance}, 52\penalty0 (1):\penalty0 35--55,
  1997.
\newblock \doi{10.1111/j.1540-6261.1997.tb03807.x}.

\bibitem[Simon(2024)]{simon2024inefficient}
Jay Simon.
\newblock Inefficient forecasts at the sportsbook: An analysis of real-time
  betting line movement.
\newblock \emph{Management Science}, 70\penalty0 (12):\penalty0 8583--8611,
  2024.
\newblock \doi{10.1287/mnsc.2022.00456}.

\bibitem[Snowberg and Wolfers(2010)]{snowberg2010explaining}
Erik Snowberg and Justin Wolfers.
\newblock Explaining the favorite-long shot bias: Is it risk-love or
  misperceptions?
\newblock \emph{Journal of Political Economy}, 118\penalty0 (4):\penalty0
  723--746, 2010.
\newblock \doi{10.1086/655844}.

\bibitem[Snowberg et~al.(2013)Snowberg, Wolfers, and
  Zitzewitz]{snowberg2013prediction}
Erik Snowberg, Justin Wolfers, and Eric Zitzewitz.
\newblock Prediction markets for economic forecasting.
\newblock In Graham Elliott and Allan Timmermann, editors, \emph{Handbook of
  Economic Forecasting}, volume~2, pages 657--687. Elsevier, 2013.

\bibitem[Stan\v{e}k(2017)]{stanek2017homebias}
Rostislav Stan\v{e}k.
\newblock Home bias in sport betting: Evidence from {C}zech betting market.
\newblock \emph{Judgment and Decision Making}, 12\penalty0 (2):\penalty0
  168--172, 2017.

\bibitem[Thaler and Ziemba(1988)]{thaler1988anomalies}
Richard~H. Thaler and William~T. Ziemba.
\newblock Anomalies: Parimutuel betting markets: Racetracks and lotteries.
\newblock \emph{Journal of Economic Perspectives}, 2\penalty0 (2):\penalty0
  161--174, 1988.
\newblock \doi{10.1257/jep.2.2.161}.

\bibitem[Vandenbruaene et~al.(2022)Vandenbruaene, De~Ceuster, and
  Annaert]{vandenbruaene2022efficient}
Jonas Vandenbruaene, Marc De~Ceuster, and Jan Annaert.
\newblock Efficient spread betting markets: A literature review.
\newblock \emph{Journal of Sports Economics}, 23\penalty0 (7):\penalty0
  907--949, 2022.
\newblock \doi{10.1177/15270025211071042}.

\bibitem[Whelan(2024)]{whelan2024risk}
Karl Whelan.
\newblock Risk aversion and favourite--longshot bias in a competitive
  fixed-odds betting market.
\newblock \emph{Economica}, 91\penalty0 (361):\penalty0 188--209, 2024.
\newblock \doi{10.1111/ecca.12500}.

\bibitem[Whelan(2025)]{whelan2025estimates}
Karl Whelan.
\newblock On estimates of insider trading in sports betting.
\newblock \emph{The Manchester School}, 2025.
\newblock \doi{10.1111/manc.12505}.

\bibitem[Wolfers and Zitzewitz(2004)]{wolfers2004prediction}
Justin Wolfers and Eric Zitzewitz.
\newblock Prediction markets.
\newblock \emph{Journal of Economic Perspectives}, 18\penalty0 (2):\penalty0
  107--126, 2004.
\newblock \doi{10.1257/0895330041371321}.

\bibitem[Woodland and Woodland(1994)]{woodland1994baseball}
Linda~M. Woodland and Bill~M. Woodland.
\newblock Market efficiency and the favorite-longshot bias: The baseball
  betting market.
\newblock \emph{The Journal of Finance}, 49\penalty0 (1):\penalty0 269--279,
  1994.
\newblock \doi{10.1111/j.1540-6261.1994.tb04429.x}.

\end{thebibliography}
